\title{(Un)Decidability Bounds of the Synthesis Problem \\ for Petri Games}
\author{Paul Hannibal
\institute{University of Oldenburg, \\ Lower Saxony, Germany }
%\institute{School of Computer Science and Engineering\\
%University of New South Wales\thanks{A fine university.}\\
%Sydney, Australia}
\email{paul.jonathan.hannibal1@uni-oldenburg.de}
%\and
%Co Author \qquad\qquad Yet S. Else
%\institute{Stanford University\\
%California, USA}
%\email{\quad is@gmail.com \quad\qquad somebody@else.org}
}
\tikzstyle{eplace}=[circle,thick,draw=black!75,fill=white!20,minimum size=5mm]
\tikzstyle{splace}=[circle,thick,draw=black!75,fill=black!25,minimum size=5mm]
\tikzstyle{red place}=[place,draw=red!75,fill=red!20]
\tikzstyle{transition}=[rectangle,thick,draw=blue!75,
\tikzstyle{every label}=[black]
\newcommand{\future}{\mathit{Fut}}
\newcommand{\unf}{\textit{unf}}
\newcommand{\transitions}{\mathcal{T}}
\newcommand{\places}{\mathcal{P}}
\newcommand{\preset}{\mathit{pre}}
\newcommand{\postset}{\mathit{post}}
\newcommand{\initmarking}{\mathit{In}}
\newcommand{\flowrelation}{\mathcal{F}}
\newcommand{\lkm}{\mathit{lkm}}
\newcommand{\lkc}{\mathit{lkc}}
\newcommand{\reachablemarkings}{\mathcal{R}}
\newcommand{\badmarkings}{\mathcal{B}}
\newcommand{\powersetnf}[1]{2_{\mathit{nf}}^{#1}}
\newcommand{\powerset}[1]{ 2^{#1}}
\newcommand{\powersetf}[1]{ 2_{\mathit{f}}^{#1}}
\newcommand{\squares}{\mathit{DP}}
\newcommand{\lowertriangles}{\mathit{HP}}
\newcommand{\uppertriangles}{\mathit{VP}}
\newcommand{\colors}{\mathit{C}}
\newcommand{\coloringproblem}{\mathit{CP}}
\newcommand{\synchronisationsegment}{\mathit{Seg}}
\newcommand{\partialrepetition}{\mathit{prc}}
\newcommand{\loopcuts}{\partialrepetition_{\mathit{loop}}}
\newcommand{\synchrequivcuts}{\mathit{sqc}}
\theoremstyle{plain}% default
\newtheorem{theorem}{Theorem}[section]
\newtheorem{lemma}[theorem]{Lemma}
\theoremstyle{definition}
\newtheorem{definition}{Definition}[section]
\theoremstyle{remark}
\begin{document}
\maketitle

\begin{abstract}
Petri games are a multi-player game model for the automatic synthesis of
distributed systems,
where the players are represented as tokens on a Petri net and are grouped
into environment players
and system players.
As long as the players move in independent parts of the net, they do not
know of each other;
when they synchronize at a joint transition, each player gets informed
of the
entire causal history of the other players.

We show that the synthesis problem for two-player Petri games under a global safety condition is NP-complete and it can be solved within a non-deterministic exponential upper bound in the case of up to 4 players. Furthermore, we show the undecidability of the synthesis problem for Petri games with at least 6 players under a local safety condition. 
\end{abstract}

\section{Introduction}
\input{sections/1-introduction-revised.tex}

\section{Foundations }
\label{sec-foundation}
In this section, we define branching processes and unfoldings as in \cite{DBLP:journals/acta/Engelfriet91}. Also, we define Petri games and their winning strategies as in \cite{FINKBEINER2017181}.

Some notation:	the \emph{power set} of a set $A$ is denoted by $\powerset A = \{ B \mid B \subseteq A \}$, 
the \emph{set of nonempty finite subsets} of $A$ by $\powersetnf A = \{ B \mid B \subseteq A \land B \neq \emptyset  \land B \text{ is finite} \}$, 
and the \emph{set of finite subsets} of $A$ by~$\powersetf A$.

A \emph{Petri net} or simply \emph{net} is a structure $N = \mathcal{(P, T }, \mathit{pre}, \mathit{post}, \mathit{In})$, where $\mathcal{P}$ is the (possibly infinite) set of \emph{places}, $\mathcal{T}$ is the (possibly infinite) set of \emph{transitions}, $\mathit{pre}$ and $\postset$ are \emph{flow mappings}, 
$\mathit{In} \subseteq \mathcal{P}$ is the \emph{initial marking}, and the following properties hold: $\mathcal{P} \cap \mathcal{T} = \emptyset$, $ \preset: \transitions \to \powersetnf\places $, 	$ \postset: \transitions \to \powersetf\places$.
A Petri net is called finite if $\places \cup \transitions$ is a finite set.
The flow mappings $\preset$ and $\postset$ are extended to places as usual: $\forall p \in \places: \preset(p) = \{t \in \transitions \mid p \in \postset(t) \}$ and $\forall p \in \places: \postset(p) = \{t \in \transitions \mid p \in \preset(t) \}$.
%	The flow relation $\flowrelation$ is defined as $\flowrelation = \{(p,t) \in \places \times \transitions \mid p \in \preset(t) \} \cup \{(t,p) \in \transitions \times \places \mid p \in \postset(t) \}$.	
A \emph{marking} $M$ of a Petri net $N$ is a \emph{multiset}  over $\mathcal{P}$. In particular, $\initmarking$ is a marking. 
By convention, a net named $N$ has the components $ \mathcal{(P, T }, \mathit{pre}, \mathit{post}, \mathit{In})$,
and analogously for net with decorated names like $N_0, N_1, N_2$, where the components are equally decorated.

A transition $t \in \mathcal{T}$ is \emph{enabled} at marking $M$ if $\preset(t) \subseteq M$. If $t$ is enabled, the transition $t$ can be \emph{fired}, such that the new marking is $M'= M - \preset(t) + \postset(t)$. This is denoted as $M|t\rangle M'$. 
%The marking $M'$ is also denoted by $M|t\rangle$.
This notation is extended to sequences of enabled transitions $M|t_1\ldots t_n \rangle M'$.
% and $M|t_1 \ldots t_n\rangle$, respectively.
A marking $M$ is \emph{reachable} if there exists a sequence of successively enabled transitions $(t_k)_{k \in \{1,\dots,n\}}$ and $ \mathit{In}|t_1 \ldots t_n\rangle M$.  This sequence can be empty.  The set of all reachable markings of a net $N$ is denoted as $\reachablemarkings(N)$.
A Petri net $N$ is called \emph{safe}, if for all reachable markings $M(p) \leq 1$ holds for all $p\in \mathcal{P}$. Then, $M$ is a subset of $P$. All Petri nets considered in this paper are safe.

A \emph{node} $x$ is a place or a transition $x \in \places \cup \transitions$. 
The binary \emph{flow relation} $\mathcal{F}$ on nodes is defined as follows:
$x\, \mathcal{F}\, y$ if $x \in \preset(y)$.
A node $x$ is a \emph{causal predecessor} of $y$, denoted as $x \leq y$, if $x \flowrelation^+ y$.
% sequence $(x_i, y_i)_{i = 1,\dots,n}$
%with $x_i\, \mathcal{F}\, y_i$, where $x_1 = x$, $y_n = y$ and $ y_i = x_{i+1}$ for all $i = 1,\dots, {n-1}$.
 Furthermore, $x \leq x$ holds for all $ x \in \mathcal{P} \cup \mathcal{T}$. Two nodes $x,y \in \mathcal{P} \cup \mathcal{T}$  are \emph{causally related}, if $x \leq y$ or $y\leq x$ holds. We say $x$ is a \emph{causal successor} of $y$, if $y \leq x$ holds. 

Two nodes $ x_1,x_2 \in \mathcal{P} \cup \mathcal{T}$ are \emph{in conflict}, denoted $x_1 \# x_2$, if there exist two transitions $t_1, t_2 \in \mathcal{T}$, $t_1 \neq t_2$ with $pre(t_1) \cap pre(t_2) \neq \emptyset$ and $t_i \leq x_i$, $i = 1,2$. A node $ x \in \mathcal{P} \cup \mathcal{T}$ is in self-conflict if $x\#x$.
Informally speaking, two nodes are in conflict if two transitions exist that share some place in their presets and each node is a causal successor of one of those transitions.
Two nodes $ x,y \in \mathcal{P} \cup \mathcal{T}$ are \emph{concurrent}, denoted $x || y$, if they are neither causally related nor in conflict.

A Petri net $N$ is \emph{finitely preceded}, if for every node $x \in \places \cup \transitions$ the set $\{y \in \places \cup \transitions \mid y \leq x \}$ is finite. That set is the causal history of a node. 
A Petri net $N$ is \emph{acyclic}, if the directed graph $(\places \cup \transitions, \flowrelation)$ is acyclic.	
The following definitions lead to the definition of a branching process. 

%We define the \emph{past} of a node in an occurence net as the set of all causal predecessors $\mathit{Past}(x) =  \{y\in \mathcal{P} \cup \mathcal{T} \mid y \leq x \} $ as this is used in later definitions.
An \emph{occurrence net} is a Petri net $N$ with the following properties:
$N$ is acyclic,  finitely preceded, $\forall p \in \mathcal{P}: |\mathit{pre}(p)| \leq 1$,	 no transition $t \in \mathcal{T}$ is in self-conflict, and $\mathit{In} = \{p \in \mathcal{P} \mid \, \mathit{pre}(p) = \emptyset \}$.

A homomorphism from one Petri net to another maps each node to a node such that the preset and postset relations are preserved including the initial marking.
Formally, let $N_1$ and  $N_2$ be two Petri nets. Then a \emph{homomorphism} from $N_1$ to $N_2$ is a mapping $h: \mathcal{P}_1 \cup \mathcal{T}_1 \rightarrow \mathcal{P}_2 \cup \mathcal{T}_2$ with following properties:
$h(\mathcal{P}_1) \subseteq \mathcal{P}_2$ and $h(\mathcal{T}_1) \subseteq \mathcal{T}_2$,
for all transitions $t \in \transitions_1$, $h$ restricted to $\mathit{pre}_1(t)$ is a bijection between $\mathit{pre}_1(t)$ and $\mathit{pre}_2(h(t))$,
for all transitions $t \in \mathcal{T}_1$, $h$ restricted to $\mathit{post}_1(t)$ is a bijection between $\mathit{post}_1(t)$ and $\mathit{post}_2(h(t))$, and
the restriction of $h$ to $\mathit{In}_1$ is a bijection between $\mathit{In}_1$ and $\mathit{In}_2$.
An \emph{isomorphism} is a bijective homomorphism.

A run is represented by a (possibly infinite) firing sequence of transitions. A branching process of a Petri net represents (possibly) multiple runs of the underlying Petri net.

{\textbf{Branching process.}}  A \emph{branching process} of a net $N_0$ is a pair $B = (N, \pi)$, where $N$ is an occurrence net and $\pi$ a homomorphism from $N$ to $N_0$ such that: 
$(*)$ For all $t_1, t_2 \in \mathcal{T}$: if $\mathit{pre}(t_1) = \mathit{pre}(t_2)$ and $\pi(t_1) = \pi(t_2)$, then $t_1 = t_2$.

An example of a Petri net and a branching process is shown in Fig. \ref{fig-example-branching-process1}. 

The notion of the set of all reachable markings of a branching process $B =(N, \pi)$ is extended to $\reachablemarkings(B) = \reachablemarkings(N)$. By convention, a branching Process $B$ has the components $(N_B, \pi_B)$. Throughout this paper, $B_1$ and $B_2$ are branching processes of an underlying net $N_0$.
The property $(*)$ of the definition of a branching process ensures that every run of the Petri net is represented at most once. Informally speaking, a run only consists of concurrent and causally related nodes and a node can be part of multiple runs. Nodes that are in conflict, cannot belong to the same run.

\begin{figure}
	\begin{center}
		\begin{minipage}{0.4\textwidth}
			\begin{tikzpicture}[node distance=1cm,>=stealth',bend angle=45,auto]
				
				\node[tokens= 1, eplace, label={above:$e_1$}] (e1) {};
				\node[transition, left of=e1, label={above:$t_1$}] (t1) {};
				\node[tokens = 1, eplace, right of=e1, label={above:$e_2$}] (e2) {};
				\node[transition, below of=e1, label={left:$t_2$}] (t2) {};
				\node[transition, below of=e2, label={right:$t_3$}] (t3) {};
				\node[eplace, below of= t2, label={left:$e_3$}] (e3) {};
				\node[eplace, below of= t3, label={right:$e_4$}] (e4) {};
				\node[left of =t1, yshift=5mm](l){$N_0:$};
				\draw[->] 
				(t1)  		edge [pre, bend left]               (e1)
				edge [post, bend right]             (e1)
				(t2) 		edge[pre] 							(e1)
				edge[pre]							(e2)
				edge[post] 							(e3)
				(t3) 		edge[pre] 							(e2)
				edge[post]							(e4)
				;
			\end{tikzpicture}
		\end{minipage}
		\begin{minipage}{0.45\textwidth}
			\begin{tikzpicture}[node distance=1cm,>=stealth',bend angle=45,auto]
				
				\node[tokens= 1, eplace, label={above:$e^1_1$}] (e1) {};
				\node[transition, left of=e1, label={above:$t^1_1$}] (t1) {};
				\node[ eplace, left of=t1, label={above:${e^2_1}$}] (e1') {};
				\node[tokens = 1, eplace, right of=e1, label={above:$e^1_2$}] (e2) {};
				\node[transition, below of=e1, label={left:$t^1_2$}] (t2) {};
				\node[transition, below of=e1', label={left:${t^2_2}$}] (t2') {};
				\node[transition, below of=e2, label={right:$t^1_3$}] (t3) {};
				\node[eplace, below of= t2, label={left:$e^1_3$}] (e3) {};
				\node[eplace, below of= t2', label={left:${e^2_3}$}] (e3') {};
				\node[eplace, below of= t3, label={right:$e^1_4$}] (e4) {};
				\node[left of=e1', xshift = 1mm] (b1) {\footnotesize$\bullet$};
				\node[left of=b1, xshift= 8mm] (b2) {\footnotesize$\bullet$};
				\node[left of=b2, xshift= 8mm] (b3) {\footnotesize$\bullet$};
				\node[left of =e1', xshift=-13mm, yshift=5mm](l){$B = (N,\pi):$};
				\draw[->] 
				(t1)  		edge [pre]               (e1)
				edge [post]             (e1')
				(t2) 		edge[pre] 							(e1)
				edge[pre]							(e2)
				edge[post] 							(e3)
				(t2') 		edge[pre] 							(e1')
				edge[pre]							(e2)
				edge[post] 							(e3')
				(t3) 		edge[pre] 							(e2)
				edge[post]							(e4)
				(b1)		edge[pre]				(e1')
				;
			\end{tikzpicture}
		\end{minipage}
	\end{center}
	\caption{A Petri net $N_0$ on the left and a branching process $B = (N, \pi)$ of $N_0$ on the right. Places are shown as circles, transitions as boxes and the preset and postset relations as arrows. The initial marking $\{e_1, e_2\}$ is represented by the black dots, the \emph{tokens}.  The homomorphism $\pi$ from $N$ to $N_0$ is given as  $\pi(e^i_j) = e_j$ and $\pi(t^i_j) = t_j$. The transitions $t^1_1$ and $t^1_2$ are in conflict, i.e., $t^1_1 \# t^1_2$, and also $t^2_2 \# t^1_2$, $t^1_3 \# t^1_2$, $t^2_2 \# t^1_3$.}
	\label{fig-example-branching-process1}
\end{figure}
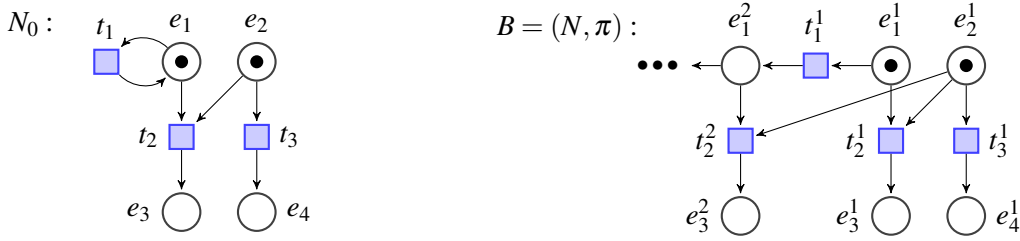
{\textbf{Homomorphism on branching processes.}} 
%Given two branching processes $B_1 = (N_1, \pi_1)$ and $B_2 = (N_2, \pi_2)$  of a net $N_0$.
 A \emph{homomorphism} from $B_1$ to $B_2$ is a homomorphism $h$ from $N_1$ to $N_2$ such that $\pi_2 \circ h = \pi_1$. The branching processes $B_1$ and $B_2$ are $\emph{isomorphic}$ if there exists an isomorphism from $B_1$ to $B_2$ which is denoted as $B_1 \cong B_2$.

A natural partial order on branching processes is defined in the following.

{\textbf{Subprocess relation of branching processes.}} 
%Let $B_1$ and $B_2$ be two branching processes of a Petri net $N$. 
$B_1$ approximates $B_2$, denoted by $B_1 \leq B_2$, if there exists an injective homomorphism from $B_1$ to $B_2$.	

%Note that this partial order is independent of a branching processes being canonical. Restricting this partial order to canonical branching processes results in a partial order, too.

Now we define the unfolding of a  net as the maximal branching process that contains all (possibly infinite) runs of a net. Loosely speaking, the unfolding of a  net is an acyclic  net with the same behaviour as the original  net, but where each place and transition has a unique causal history.

{\textbf{Unfolding.}} 
%  \todo{This definition necessary or define it as maximal with respect to $\leq$?}
The \emph{unfolding} $\unf(N_0)$ of a Petri net $N_0$ is the maximal branching process with respect to the subprocess relation $\leq$ of branching processes. This definition is unique up to isomorphism.
We refer to the components of the unfolding as $\transitions_{\unf(N_0)}$, $\places_{\unf(N_0)}$, $\preset_{\unf(N_0)}$, $\postset_{\unf(N_0)}$, and $\initmarking_{\unf(N_0)}$.
%A branching process $B = (N, \pi)$ with $N =  \mathcal{(P, T },\preset, \postset, \mathit{In})$ is isomorphic to the unfolding $\mathit{unf}(N_0)$ of the underlying Petri net \\$N_0 = (\mathcal{P}_0, \mathcal{T}_0, \preset_0, \postset_0, \mathit{In}_0)$ if and only if the following holds: For all transitions $t_0 \in \mathcal{T}_0$ and all sets $C \subseteq \mathcal{P}$ of pairwise concurrent places exists $t \in \mathcal{T}$ with $\mathit{pre}(t) = C$ and $\pi(t) = t_0$, if the restriction of $\pi$ to $C$ is a bijection between  $C$ and $\mathit{pre}_0(t_0)$. The notation $\unf(N_0)$ is one representative of the unfolding. 

In Fig. \ref{fig-example-branching-process1}, the branching process $B$ is the unfolding of the Petri net $N_0$ assuming that the dots to the left of the place ${e^2_1}$ indicate that the branching process continues infinitely in the same way.

 We continue with the definition of a Petri game. A Petri game is played on  a finite and safe Petri net. Tokens may transit from a system place to an environment place and vice versa.
 
\textbf{Petri game.}
	A \emph{Petri game} on an underlying finite and safe Petri net $N_0$ is a tuple
	$G = (\mathcal{P}^S_0, \mathcal{P}^E_0 , \mathcal{T}_0,$ $ \preset_0, \postset_0, \mathit{In}_0, \mathcal{B})$, 
	where the set $\mathcal{P}_0$  of places of $N_0$ is partitioned into disjoint sets of \emph{system places} $\mathcal{P}^S_0$,  and 
	\emph{environment places} $\mathcal{P}^E_0$, and where $\mathcal{B} \in 2^{\mathcal{P}_0}$ is the set of \emph{bad markings}.

A winning strategy of a Petri game is a branching process of the underlying Petri net of the game. A strategy must satisfy 4 properties that are reasonable properties of implementations of processes. Beside a \emph{safety} property, each process must act deterministically, \emph{determinism}, and at least one process must enable a transition if possible, \emph{deadlock avoiding}.  Loosely speaking, the \emph{justified refusal} property forces the system players to commit to transitions that are always allowed on their current place.
Strategies for Petri games are obtained by cutting out parts of the unfolding. 
%Since every place in the unfolding has a unique causal history the system players are allowed to make different decisions on every place.

\textbf{Winning strategy}
	A \emph{winning strategy} $\sigma$ of a Petri-game $G = (\mathcal{P}^S_0, \mathcal{P}^E_0 , \mathcal{T}_0, \preset_0, \postset_0, \mathit{In_0}, \mathcal{B})$ with underlying Petri-net $N_0$ is a branching process $\sigma = (N, \pi)$ of $N_0$ satisfying the following properties:
	\begin{enumerate}
		\item \textbf{Justified refusal}:
		Let $C \subseteq \mathcal{P}$ be a set of pairwise concurrent places and $t \in \mathcal{T}_0$ a transition with $\pi(C) = \mathit{pre}_0(t)$. 
		If no $t' \in \mathcal{T}$ with $\pi(t') = t$ and $\mathit{pre}(t') = C$ exists, then there exists a place $p \in C$ with $\pi(p) \in \mathcal{P}^S_0$, such that $t \notin \pi(\mathit{post}(p))$.
		\item \textbf{Safety}: For all reachable markings $M \in \reachablemarkings(N)$ it holds that $\pi(M) \notin \mathcal{B}$.
		\item \textbf{Determinism}: For all $p \in \mathcal{P}$ with $\pi(p) \in \mathcal{P}^S_0$ and for all reachable markings $M$ in $N$ with $p \in M$ there exists at most one transition $t \in \mathit{post}(p)$, which is enabled in $M$. 
		\item \textbf{Deadlock avoiding}: For all reachable markings $M$ in $N$ there exists an enabled transition, if a transition is enabled in $\pi(M)$ in the underlying Petri-net $N_0$. 
	\end{enumerate}

We refer to a token on a system place as a system player, and a token on an environment place as an environment player.
The \emph{justified refusal} property ensures that a system player allows all instances of an outgoing transition or no instance at all.
The global \emph{safety} property ensures that no bad markings are reachable.
The \emph{determinism} property ensures that for each system place at most one transition is enabled in every reachable marking.
The \emph{deadlock avoiding} property ensures that the system allows at least one transition in every reachable marking if an enabled transition exists in that marking, e.g. the system players cannot add deadlocks to the existing deadlocks in the Petri net by forbidding transitions.

The \emph{unfolding} $\unf(G)$ of a Petri game $G$ is like the unfolding $\unf(N_0) =(N,\pi)$ of the underlying Petri net $N_0$ of $G$,
additionally keeping the distinction between system and environment: a place $p$ in $N$ is a system place if
$\pi(p) \in \mathcal{P}^S_0$ and an environment place if $\pi(p) \in \mathcal{P}^E_0$.
A winning strategy $\sigma_G$ of $G$ can be seen as a subprocess of $\unf(G)$.
%\todo{example here and not in introduction}

Fig. \ref{fig-example-introduction} shows a Petri game with 4 players modelling the control of a room with two doors that must not be opened at the same time so that the two places of the bad marking $\{{O^1, O^2}\}$ are not reached simultaneously. After receiving a request to open door via transition $r^1$ the first system player on place $S^{12}$ can decide to proceed with communicating (transition $t$) or without communicating (transition $n^1$) to the second system player, then on place $S^{22}$. On place $S^{13}$ the system player chooses between opening the door ($o^1$) or denying the request ($d^1$) for the environment player waiting on $W^1$. From place $S^{14}$ the system player can close the door it has opened before ($c^1$). The second system player has the same options after receiving a request via $r^2$. After that, if both players open their doors the bad marking $\{{O^1, O^2}\}$ is reached. At least one system player has to deny the request to win the game.

%\todo{label spacing}
\begin{figure}[ht]
	\begin{center}
		\scalebox{0.9}{
			\begin{tikzpicture}[node distance=1.1cm,>=stealth',bend angle=45,auto]
				
				\node[eplace, label={above:$P^1$}] (openr) {};
				\node[transition, left of=openr, label={above:$p^1$}] (topenr) {};
				\node[tokens=1, eplace, left of=topenr,  label={above:$C^1$}] (eclosed) {};
				%	\node[transition, below of=tnor, label={left:$$}] (topenr) {};
				%	\node[eplace, right of=topenr, label={above:$open\_r^1$}] (openr) {};
				
				\node[transition, right of=openr, , label={above:$r^1$}] (tcom){};
				%	\node[transition, right of=openr] (tcom2){};
				\node[splace, tokens=1, right of=tcom, label={above:$S^{11}$}] (sidle) {};
				\node[splace, below of=sidle,label={above right:$S^{12}$}] (sr) {};
				
				\node[eplace, below of=openr, label={above:$W^1$}](waitr) {};
				\node[transition,below of=sr, label={left:$n^1$}](tnosyscom) {};
				\node[splace, below of=tnosyscom,label={right:$S^{13}$}](saction) {};
				\node[transition, below of=waitr, xshift=-5mm,label={left:$d^1$} ] (tdeny) {};
				\node[transition, below of=tdeny,xshift=5mm, label={left:$o^1$} ] (topen) {};
				\node[eplace, fill=red, left of=topen, xshift=-11mm,yshift=-11mm, label={left:$O^1$} ] (eopen) {};
				\node[transition, below of=topen, label={above:$c^1$} ] (tclose) {};
				\node[splace,right of=tclose,xshift=11mm, label={right:$S^{14}$}] (sclose) {};
				
				\node[transition, right of=sr, label={above:$t$}] (tsyscom) {};
				
				\draw[->] 
				(topenr)  		edge [pre]                            (eclosed)
				edge [post]                           (openr)
				(tcom)  		edge [pre]                            (sidle)
				edge [pre]							  (openr)
				edge [post, bend left]                           (waitr)
				edge [post]                           (sr)
				(tnosyscom)		edge [pre]							(sr)
				edge [post]							(saction)
				(tdeny)			edge [pre]							(waitr)
				edge [pre]							(saction)
				edge [post]							(sidle)
				edge [post]							(eclosed)
				(topen)			edge [pre]							(waitr)
				edge [pre]							(saction)
				edge [post]							(sclose)
				edge [post]							(eopen)
				(tclose)		edge [pre]							(eopen)
				edge [pre]							(sclose)
				edge [post,bend left=50]				(eclosed)
				edge [post]							(sidle)
				;
				
				\node[splace, tokens=1, right of=sidle, xshift=11mm, label={above:$S^{21}$}] (sidle2) {};
				\node[splace, below of=sidle2, label={above left:$S^{22}$}] (sr2) {};
				\node[transition, right of=sidle2, label={above:$r^2$}] (tcom2){};
				\node[eplace, right of=tcom2, label={above:$P^2$}] (openr2) {};
				\node[transition, right of=openr2, label={above:$p^2$}] (topenr2) {};
				\node[tokens=1, eplace, right of=topenr2,  label={above:$C^2$}] (eclosed2) {};
				%	\node[transition, below of=tnor, label={left:$$}] (topenr) {};
				%	\node[eplace, right of=topenr, label={above:$open\_r^1$}] (openr) {};

				%	\node[transition, right of=openr] (tcom2){};

				\node[eplace, below of=openr2, label={above:$W^2$}](waitr2) {};
				\node[transition,below of=sr2, label={right:$n^2$}](tnosyscom2) {};
				\node[splace, below of=tnosyscom2, label={left:$S^{23}$}](saction2) {};
				\node[transition, below of=waitr2, xshift=5mm,label={right:$d^2$} ] (tdeny2) {};
				\node[transition, below of=tdeny2,xshift=-5mm, label={right:$o^2$} ] (topen2) {};
				\node[eplace, fill=red, right of=topen2, xshift=11mm,yshift=-11mm, label={right:$O^2$} ] (eopen2) {};
				\node[transition, below of=topen2, label={above:$c^2$} ] (tclose2) {};
				\node[splace,left of=tclose2,xshift=-11mm, label={left:$S^{24}$}] (sclose2) {};
				
				\node[left of=eclosed, yshift=5mm](l){$G$:};
				
				\draw[->] 
				(topenr2)  		edge [pre]                            (eclosed2)
				edge [post]                           (openr2)
				(tcom2)  		edge [pre]                            (sidle2)
				edge [pre]							  (openr2)
				edge [post, bend right=50]                           (waitr2)
				edge [post]                           (sr2)
				(tnosyscom2)		edge [pre]							(sr2)
				edge [post]							(saction2)
				(tdeny2)			edge [pre]							(waitr2)
				edge [pre]							(saction2)
				edge [post]							(sidle2)
				edge [post]							(eclosed2)
				(topen2)			edge [pre]							(waitr2)
				edge [pre]							(saction2)
				edge [post]							(sclose2)
				edge [post]							(eopen2)
				(tclose2)		edge [pre]							(eopen2)
				edge [pre]							(sclose2)
				edge [post,bend right]				(eclosed2)
				edge [post]							(sidle2)
				(tsyscom)		edge [pre]			(sr)
				edge [pre]			(sr2)
				edge [post]			(saction)
				edge [post]			(saction2)
				;
				\draw[->, dashed] 
				%	(tf1)  		edge [post]                           (r1)
				%	edge [post]                           (r2)
				;
			\end{tikzpicture}
		}
	\end{center}
	\caption{A Petri game $G$: the grey places belong to the system players and the white places to the environment players. There are two doors and two system players taking requests to open the door. After a request the system players decide whether to communicate and open their door afterwards. The bad marking $\{O^1, O^2\}$ is reached if the system players decide to open both doors simultaneously. }
	
	\label{fig-example-introduction}
	
\end{figure}
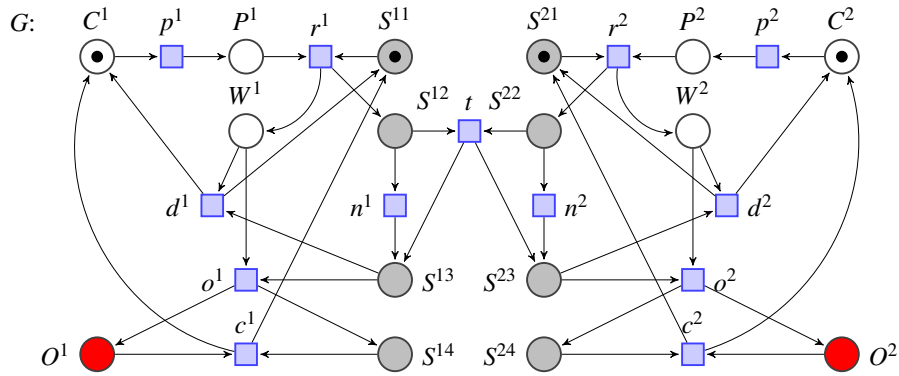

Fig. \ref{fig-example-introduction2} shows an initial part of the unfolding of the Petri game in Fig. \ref{fig-example-introduction}. The parts that are not greyed out are the initial parts of a winning strategy. Here, the system player agree on communicating via $t_1$. The first player decides to open its door via transition $o_2^1$ while the other door remains closed via transition $d_2^2$. After the transition $c_2^1$ the first door is closed again. The following is not shown in Fig. \ref{fig-example-introduction2} anymore: after both players have received another opening request via $r^1_2$ and $r^2_2$, respectively, the winning strategy could continue opening door 2 and keeping door 1 closed since the different causal histories allow different decisions.

 According to the definition of a winning strategy it is also possible that a winning strategy denies all requests to open a door. The example is chosen with foresight for the content in Section \ref{sec-decidability}.

\begin{figure}[ht]
	\begin{center}
		\scalebox{0.83}{
			\begin{tikzpicture}[node distance=1.1cm,>=stealth',bend angle=45,auto]
				
				\node[eplace, label={above:$P^1_1$}] (openr) {};
				\node[transition, left of=openr, label={above:$p^1_1$}] (topenr) {};
				\node[tokens=1, eplace, left of=topenr,  label={above:$C_1^1$}] (eclosed) {};
				%	\node[transition, below of=tnor, label={left:$$}] (topenr) {};
				%	\node[eplace, right of=topenr, label={above:$open\_r^1$}] (openr) {};
				
				\node[transition, right of=openr, label={above:$r_1^1$}] (tcom){};
				%	\node[transition, right of=openr] (tcom2){};
				\node[splace, tokens=1, right of=tcom, label={above:$S^{11}_1$}  ] (sidle) {};
				\node[splace, below of=sidle,  label={above right:$S^{12}_1$}] (sr) {};
				
				\node[eplace, below of=openr, label={above:$W^1_1$}](waitr) {};
				\node[transition,below of=sr, xshift=-11mm, opacity=0.5, label={[opacity=0.5]right:$n^1_1$}](tnosyscom) {};
				\node[splace, left of=tnosyscom,  label={[opacity=0.5]below:$S^{13}_1$}, opacity=0.5](saction) {};
				\node[splace, below of=tnosyscom, xshift=11mm,  label={above right:$S^{13}_2$}](sactionb) {};
				\node[transition, left of=waitr, xshift=-0mm,label={[opacity=0.5]left:$d_1^1$}, opacity=0.5 ] (tdeny) {};
				\node[transition, below of=tdeny,yshift=-11mm, label={[opacity=0.5]right:$o_1^1$}, opacity=0.5 ] (topen) {};
				\node[eplace, fill=red, left of=topen,yshift=-11mm, label={[opacity=0.5]above:$O_1^1$}, opacity=0.5 ] (eopen) {};
				\node[transition, below of=eopen, label={[opacity=0.5]right:$c_1^1$}, opacity=0.5 ] (tclose) {};
				\node[splace,right of=eopen, label={[opacity=0.5]above right:$S^{14}_1$}, opacity=0.5] (sclose) {};
				
				\node[ eplace, below of=tclose,  label={[opacity=0.5]below:$C_3^1$}, opacity=0.5] (eclosedb) {};
				\node[splace, right of=eclosedb,  label={[opacity=0.5]below:$S^{11}_3$}, opacity=0.5] (sidleb) {};
				
				\node[splace, below of=tdeny,  label={[opacity=0.5]below left:$S^{11}_2$}, opacity=0.5] (sidlec) {};
				\node[ eplace, left of=sidlec,  label={[opacity=0.5]left:$C^1_2$}, opacity=0.5] (eclosedc) {};
				
				\node[transition, right of=sclose, xshift=11mm,label={[opacity=0.5]left:$d^1_2$}, opacity=0.5 ] (tdenyb) {};
				\node[transition, right of=tdenyb,xshift=-4mm,label={right:$o^1_2$} ] (topenb) {};
				
				\node[splace, below of=tdenyb, label={},  label={[opacity=0.5]below:$S^{11}_4$}, opacity=0.5] (sidled) {};
				\node[ eplace, left of=sidled,  label={[opacity=0.5]below:$C^1_4$}, opacity=0.5] (eclosedd) {};
				
				\node[eplace, fill=red, below of=topenb, label={right:$O^1_2$} ] (eopenb) {};
				\node[splace,right of=eopenb,  label={above:$S^{14}_2$}] (scloseb) {};
				
				\node[transition, below of=eopenb, label={right:$c^1_2$} ] (tcloseb) {};
				\node[ eplace, below of=tcloseb,  label={left:$C^1_5$}] (eclosede) {};
				\node[splace, right of=eclosede,  label={above:$S^{11}_5$}] (sidlee) {};
				
				\node[transition, right of=sr, label={below:$t_1$}] (tsyscom) {};
				
				\node[left of= sidlee, xshift=5.5mm] (ksidlee) {};
				\node[below of=ksidlee] (lsidlee){};
				\draw[thick, dotted] (ksidlee) -- (lsidlee);
				
				\node[right of= eclosedd, xshift=-5.5mm] (keclosedd) {};
				\node[below of=keclosedd] (leclosedd){};
				\draw[thick, dotted] (keclosedd) -- (leclosedd);
			
				\node[right of= eclosedc, xshift=-5.5mm] (keclosedc) {};
				\node[below left of=keclosedc] (leclosedc){};
				\draw[thick, dotted] (keclosedc) -- (leclosedc); 
				
				\node[left of= sidleb, xshift=5.5mm] (ksidleb) {};
				\node[below of=ksidleb] (lsidleb){};
				\draw[thick, dotted] (ksidleb) -- (lsidleb);
				
				\draw[->, opacity=0.5] 
				
				(tnosyscom)		edge [pre]							(sr)
				edge [post]							(saction)
				(tdeny)			edge [pre]							(waitr)
				edge [pre]							(saction)
				edge [post]							(sidlec)
				edge [post]							(eclosedc)
				(topen)			edge [pre]							(waitr)
				edge [pre]							(saction)
				edge [post]							(sclose)
				edge [post, bend left=5]							(eopen)
				(tclose)		edge [pre]							(eopen)
				edge [pre]							(sclose)
				edge [post]				(eclosedb)
				edge [post]							(sidleb)
				(tdenyb)			edge [pre]							(waitr)
				edge [pre]							(sactionb)
				edge [post]							(sidled)
				edge [post]							(eclosedd)
				
				;
				\draw[->]
				(topenr)  		edge [pre]                            (eclosed)
				edge [post]                           (openr)
				(tcom)  		edge [pre]                            (sidle)
				edge [pre]							  (openr)
				edge [post, bend left]                           (waitr)
				edge [post]                           (sr)
				(topenb)			edge [pre]							(waitr)
				edge [pre]							(sactionb)
				edge [post]							(scloseb)
				edge [post]							(eopenb)
				(tcloseb)		edge [pre]							(eopenb)
				edge [pre]							(scloseb)
				edge [post]				(eclosede)
				edge [post]							(sidlee)
				;
				
				%		\node[splace, tokens=1, right of=sidle, xshift=11mm, label={}] (sidle2) {};
				%		\node[splace, below of=sidle2, label={}] (sr2) {};
				%		\node[transition, right of=sidle2] (tcom2){};
				%		\node[eplace, right of=tcom2, label={above:$open\_r^2$}] (openr2) {};
				%		\node[transition, right of=openr2, label={above:$$}] (topenr2) {};
				%		\node[tokens=1, eplace, right of=topenr2,  label={above:$closed\_door^2$}] (eclosed2) {};
				%		%	\node[transition, below of=tnor, label={left:$$}] (topenr) {};
				%		%	\node[eplace, right of=topenr, label={above:$open\_r^1$}] (openr) {};
				%		
				%		
				%		%	\node[transition, right of=openr] (tcom2){};
				%		
				%		
				%		\node[eplace, below of=openr2, label={above:$pending\_r^2$}](waitr2) {};
				%		\node[transition,below of=sr2](tnosyscom2) {};
				%		\node[splace, below of=tnosyscom2](saction2) {};
				%		\node[transition, below of=waitr2, xshift=5mm,label={right:$t\_deny^2$} ] (tdeny2) {};
				%		\node[transition, below of=tdeny2,xshift=-5mm, label={right:$t\_open^2$} ] (topen2) {};
				%		\node[eplace, fill=red, right of=topen2, xshift=11mm,yshift=-11mm, label={below:$open\_door^2$} ] (eopen2) {};
				%		\node[transition, below of=topen2, label={below:$t\_close^2$} ] (tclose2) {};
				%		\node[splace,left of=tclose2,xshift=-11mm] (sclose2) {};
				
				\node[splace, tokens=1, right of=sidle, xshift=11mm, label={above:$S^{21}_1$}] (sidle2) {};
				\node[splace, below of=sidle2, label={above left:$S^{22}_1$}] (sr2) {};
				\node[transition, right of=sidle2, label={above:$r_1^2$}] (tcom2){};
				\node[eplace, right of=tcom2, label={above:$P_1^2$}] (openr2) {};
				\node[transition, right of=openr2, label={above:$p^2_1$}] (topenr2) {};
				\node[tokens=1, eplace, right of=topenr2,  label={above:$C_1^2$}] (eclosed2) {};
				
				\node[eplace, below of=openr2, label={above:$W_1^2$}](waitr2) {};
				\node[transition,below of=sr2, xshift=11mm, opacity=0.5, label={left:$n^2_1$}](tnosyscom2) {};
				
				\node[splace, left of=tnosyscom2, xshift=2mm, yshift=-11mm,label={above:$S^{23}_2$}](saction2b) {};
				\node[splace, right of=tnosyscom2, label={[opacity=0.5]below:$S^{23}_1$}, opacity=0.5](saction2) {};
				\node[transition, right of=waitr2, xshift=-0mm,label={[opacity=0.5]right:$d_1^2$}, opacity=0.5 ] (tdeny2) {};
				\node[transition, right of=saction2, yshift=-11mm,label={[opacity=0.5]left:$o_1^2$}, opacity=0.5 ] (topen2) {};
				\node[splace,below of=topen2,label={[opacity=0.5]above left:$S^{24}_1$}, opacity=0.5] (sclose2) {};
				\node[eplace, fill=red, right of=sclose2,  label={[opacity=0.5]above:$O_1^2$}, opacity=0.5 ] (eopen2) {};
				\node[transition, below of=eopen2, label={[opacity=0.5]left:$c_1^2$}, opacity=0.5 ] (tclose2) {};

				\node[ eplace, below of=tclose2,  label={[opacity=0.5]below:$C_3^2$}, opacity=0.5] (eclosed2b) {};
				\node[splace, left of=eclosed2b,label={[opacity=0.5]below:$S^{21}_3$}, opacity=0.5] (sidle2b) {};
				
				\node[splace, below of=tdeny2, label={[opacity=0.5]below right:$S^{21}_2$}, opacity=0.5] (sidle2c) {};
				\node[ eplace, right of=sidle2c,  label={[opacity=0.5]right:$C^2_2$}, opacity=0.5] (eclosed2c) {};
				
				\node[transition, left of=sclose2, xshift=-11mm,label={right:$d_2^2$} ] (tdeny2b) {};
				\node[transition, left of=tdeny2b,xshift=4mm,label={[opacity=0.5] left:$o_2^2$}, opacity=0.5 ] (topen2b) {};
				
				\node[splace, below of=tdeny2b,label={below:$S^{21}_4$}] (sidle2d) {};
				\node[ eplace, right of=sidle2d,  label={below:$C_4^2$}] (eclosed2d) {};
				
				\node[eplace, fill=red, below of=topen2b, label={[opacity=0.5] left:$O^2_2$}, opacity=0.5 ] (eopen2b) {};
				\node[splace,left of=eopen2b, label={[opacity=0.5]above:$s^{24}_2$}, opacity=0.5] (sclosed2b) {};
				
				\node[transition, below of=eopen2b, label={[opacity=0.5]left:$c_2^2$}, opacity=0.5 ] (tclose2b) {};
				\node[ eplace, below of=tclose2b,  label={[opacity=0.5]right:$C_5^2$}, opacity=0.5] (eclosed2e) {};
				\node[splace, left of=eclosed2e, label={[opacity=0.5]above:$S^{21}_5$}, opacity=0.5] (sidle2e) {};
				
				\node[left of=eclosed, yshift=5mm, xshift=-5mm](l){$\sigma_G= (N, \pi)$:};
				
				\node[right of= sidle2e, xshift=-5.5mm] (ksidle2e) {};
				\node[below of=ksidle2e] (lsidle2e){};
				\draw[thick, dotted] (ksidle2e) -- (lsidle2e);
				
				\node[left of= eclosed2d, xshift=5.5mm] (keclosed2d) {};
				\node[below of=keclosed2d] (leclosed2d){};
				\draw[thick, dotted] (keclosed2d) -- (leclosed2d);
				
				\node[left of= eclosed2c, xshift=5.5mm] (keclosed2c) {};
				\node[below right of=keclosed2c] (leclosed2c){};
				\draw[thick, dotted] (keclosed2c) -- (leclosed2c); 
				
				\node[right of= sidle2b, xshift=-5.5mm] (ksidle2b) {};
				\node[below of=ksidle2b] (lsidle2b){};
				\draw[thick, dotted] (ksidle2b) -- (lsidle2b);

				\draw[->, opacity=0.5] 
				
				(tnosyscom2)		edge [pre]							(sr2)
				edge [post]							(saction2)
				(tdeny2)			edge [pre]							(waitr2)
				edge [pre]							(saction2)
				edge [post]							(sidle2c)
				edge [post]							(eclosed2c)
				(topen2)			edge [pre]							(waitr2)
				edge [pre]							(saction2)
				edge [post]							(sclose2)
				edge [post, bend right=15]							(eopen2)
				(tclose2)		edge [pre]							(eopen2)
				edge [pre]							(sclose2)
				edge [post]				(eclosed2b)
				edge [post]							(sidle2b)
				(topen2b)			edge [pre]							(waitr2)
				edge [pre]							(saction2b)
				edge [post]							(sclosed2b)
				edge [post]							(eopen2b)
				(tclose2b)		edge [pre]							(eopen2b)
				edge [pre]							(sclosed2b)
				edge [post]				(eclosed2e)
				edge [post]							(sidle2e)
				;
				\draw [->]
				(topenr2)  		edge [pre]                            (eclosed2)
				edge [post]                           (openr2)
				(tcom2)  		edge [pre]                            (sidle2)
				edge [pre]							  (openr2)
				edge [post, bend right=50]                           (waitr2)
				edge [post]                           (sr2)
				(tsyscom)		edge [pre]			(sr)
				edge [pre]			(sr2)
				edge [post]			(sactionb)
				edge [post]			(saction2b)	
				(tdeny2b)			edge [pre]							(waitr2)
				edge [pre]							(saction2b)
				edge [post]							(sidle2d)
				edge [post]							(eclosed2d)
				;
				\draw[->, dashed] 
				%	(tf1)  		edge [post]                           (r1)
				%	edge [post]                           (r2)
				;
			\end{tikzpicture}
		}
	\end{center}
	\caption{An initial part of the unfolding of the Petri game $G$ in Fig.~\ref{fig-example-introduction}. The nodes that are not greyed out form an initial part of a winning strategy $\sigma_G$ where door 1 gets opened and door 2 remains closed after the system players have communicated. The homomorphism $\pi$ is  defined analogously to that in Fig. \ref{fig-example-branching-process1}.}
	\label{fig-example-introduction2}
\end{figure}

\section{Decidability Results}
\label{sec-decidability}
%In this section, we show the decidability of the synthesis problem for
%Petri games with up to 4 players for global safety as winning condition.
%A related result was obtained for control games in
%\cite{DBLP:conf/fsttcs/Gimbert17}:
%there it was shown that the synthesis problem for 4-process control
%games with a local termination condition is decidable.
%Employing the translation of control games into Petri games in
%\cite{DBLP:conf/concur/BeutnerFH19},
%this result yields a decidability result for Petri games with a fixed
%number of 4 players for a local termination condition.
%There are no complexity bounds for the 4-player case given in
%\cite{DBLP:conf/fsttcs/Gimbert17}, and the translation to Petri games
%already generates an exponential blow-up in the number of nodes
%\cite{DBLP:conf/concur/BeutnerFH19}.
%\todo{references and labels}

In this section, we show that the synthesis problem for Petri games with up to 4 players under a global safety condition is decidable in non-deterministic exponential time (NEXP), and NP-complete in the 2-player case.
In \cite{DBLP:conf/fsttcs/Gimbert17}, a related result is shown that the synthesis problem for 4-process control games with a local termination condition is decidable. The translation \cite{DBLP:conf/concur/BeutnerFH19} of this result to Petri games gives a decidability result for Petri games with 4 players with a local termination condition, without process generation and deletion. There are no complexity bounds for the 4-player case given in \cite{DBLP:conf/fsttcs/Gimbert17}, and the translation to Petri games already generates an exponential blow up in the number of nodes \cite{DBLP:conf/concur/BeutnerFH19}.

%We show that Petri games with a global safety condition and a varying number of players up to 4 are decidable in non-deterministic exponential time. In \cite{DBLP:conf/csl/FinkbeinerGHO22}, the generalisation of the winning condition from bad places to bad markings of the decidability result with only one environment player and an arbitrary bounded number of system players \cite{FINKBEINER2017181} is achieved in 2-EXPTIME, while the original result is EXPTIME-complete.

%  For two-player Petri games that the synthesis problem is NP-complete.

%
%We show that Petri games with a global safety condition (that no `bad'
%marking is reachable) and a varying number of players up to 4 are decidable
%in non-deterministic exponential time.
%Note that in \cite{FINKBEINER2017181} only local safety conditions (no
%`bad' places are reachable) were considered
%for one environment player and an arbitrary bounded number of system
%players,
%and deciding whether the system players have a winning strategy was
%shown to be EXPTIME-complete.
%The generalisation of the winning condition from bad places to bad
%markings was investigated in \cite{DBLP:conf/csl/FinkbeinerGHO22}:
%deciding whether the system players have a winning strategy was shown to
%be decidable in 2-EXPTIME.
%Two-player Petri games can be seen as a natural generalisation of
%infinite games on graphs with a safety winning condition.
%For two-player Petri games, we show that the synthesis problem is
%NP-complete.

%\textbf{N-player Petri game.}

 A Petri game $G$ is called a \emph{K-player Petri game}, $K \in \mathbb{N}$, if and only if for all reachable markings $M \in \reachablemarkings(G)$ it holds that $|M| \leq K$. Two-player Petri games can be seen as a natural generalisation of infinite games on graphs with a safety winning condition. NP-hardness is established by a reduction of the 3-SAT problem to 2-player Petri games.
% First, we show that two-player Petri games are NP-hard.  Later, in Theorem~\ref{theorem-complexity-petri-games}, we show that this problem is in NP.

\begin{lemma}
	\label{lemma-2playernphard}
	There exists a polynomial-time reduction of the 3-SAT problem to the synthesis problem of  two-player Petri games. 	
\end{lemma}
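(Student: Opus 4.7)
The plan is to polynomially reduce 3-SAT by building, from a 3-SAT instance $\phi = c_1 \wedge \cdots \wedge c_m$ over variables $x_1,\ldots,x_n$, a 2-player Petri game $G_\phi$ whose system has a winning strategy iff $\phi$ is satisfiable. The system token will play the existential role of committing to an assignment, and the environment token will play the universal role of selecting a clause to challenge.

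The net has one initial system token at a place $s_0$ and one environment token at $e_0$. A \emph{commit chain} of system places $s_0,s_1,\ldots,s_n$ encodes the assignment: each $s_{i-1}$ has two outgoing transitions $T_i$ and $F_i$ (for $x_i = \text{true}$ resp.\ $\text{false}$) with common postset $\{s_i\}$. Determinism forces a winning strategy to include exactly one of $T_i, F_i$ at each instance of $s_{i-1}$, and since this chain is linear along the system flow there is only one such instance per place, so the strategy commits to a single total assignment. Concurrently, transitions $C_1,\ldots,C_m$ from $e_0$ lead to environment places $e_{c_1},\ldots,e_{c_m}$, encoding the clause choice. For each clause $c_j$, a synchronization transition $\mathrm{Sync}_j$ with preset $\{s_n, e_{c_j}\}$ joins the two flows, and a polynomial-size verification gadget then forces the system to exhibit a literal of $c_j$ satisfied by its committed assignment, steering any inconsistent play into a designated bad marking.

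For correctness, given a satisfying assignment $A$, I would build the strategy that plays $A$ in the commit chain and, at every instance created by $\mathrm{Sync}_j$, picks a literal of $c_j$ that $A$ satisfies and traverses the verification gadget consistently; the four properties (justified refusal, safety, determinism, deadlock avoiding) are routine to check. Conversely, any winning strategy determines a total assignment from its commit-chain choices (one transition kept per $s_{i-1}$ by determinism), and since the strategy survives the environment's challenge on every $c_j$, this assignment must satisfy every clause, hence $\phi$.

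The main obstacle is designing the verification gadget within polynomial size under the 2-token bound. Because the underlying net cannot store the assignment in an additional token, the gadget must exploit the causal memory that a strategy carries as a subprocess of the unfolding: each occurrence of a post-synchronization system place in the strategy is a distinct instance whose causal history encodes the commit-chain choices, and the strategy's deterministic transition selection at that instance is a function of this history. The gadget will be arranged so that, for a literal claim incompatible with the commit recorded in the causal history, the combination of justified refusal and determinism funnels the play into a bad marking while a consistent claim avoids it; making this precise so that the system cannot effectively ``re-assign'' a variable inside the gadget is the delicate step, and the rest of the reduction is routine polynomial-time bookkeeping.
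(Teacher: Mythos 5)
There is a genuine gap, and it sits exactly where you flag it: the verification gadget after $\mathrm{Sync}_j$. Once the system token and the environment token synchronize, all of the information about the committed assignment lives only in the \emph{causal history} of the single post-synchronization instance, while the winning condition of a Petri game is a set of \emph{bad markings of the underlying net}, i.e.\ sets of places of a polynomial-size net. A marking cannot see the causal history. After the commit-chain places $s_0,\dots,s_n$ have been consumed by $\mathrm{Sync}_j$, no concurrent witness of the assignment remains in any reachable marking, so no choice of bad markings can distinguish a ``consistent'' literal claim from a ``lying'' one: a strategy that commits to $A$ and then claims an arbitrary literal of $c_j$ is still deterministic, satisfies justified refusal, and produces exactly the same set of reachable markings in the underlying net as an honest strategy. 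Determinism and justified refusal constrain behaviour per instance; they do not force consistency between a player's present choice and its own past once that past is no longer reflected in the marking. So the ``delicate step'' is not merely unfinished --- the gadget cannot work in the form you propose.

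The paper's reduction resolves this by doing the opposite of synchronizing: it uses two system players (no environment player at all) running \emph{concurrently and never communicating}. The top player walks through places labelled $x_k$ or $\bar{x}_k$, committing to an assignment; the bottom player walks through the clauses, resting on a place labelled by the literal it selects for each clause; the bad markings are exactly the conflicting pairs $\{x_k, x^i_j\}$ with $x^i_j=\bar x_k$ and $\{\bar x_k, x^i_j\}$ with $x^i_j=x_k$. Because the players are concurrent and oblivious of each other, safety must hold under every interleaving, so any literal choice inconsistent with the assignment is caught by some reachable bad marking. The lesson is that the global safety condition over \emph{pairs of concurrent places} is the mechanism that cross-checks the two tokens; your design discards concurrency at precisely the moment it is needed. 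To repair your proof you would have to keep an assignment-witness token alive and concurrent with the literal-claim token, which collapses essentially into the paper's construction.
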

\begin{proof}
	Let $F = (x^1_1 \lor x^1_2 \lor x^1_3) \land \ldots \land (x^n_1 \lor x^n_2 \lor x^n_3)$ be an instance of the 3-SAT problem in conjunctive normal form where all clauses consist of exactly three literals and where $x^i_j$,  $i= 1, \ldots, n$ and $j=1,2,3$,  is a positive or negative (with overline) literal from the set $\{x_1, \bar{x}_1, \ldots x_m, \bar{x}_m \}$.
	\begin{figure}
		\centering
		\scalebox{0.9}{
		\begin{tikzpicture}[node distance=1.12cm,>=stealth',bend angle=45,auto]
			
			\node[splace, tokens=1] (sc1) {};
			\node[transition, right of=sc1, yshift=5mm, label={above:$$}] (tx1) {};
			\node[transition, below of=tx1, label={above:$$}] (tx1n) {};
			
			\node[ splace, right of=tx1,  label={above:$x_1$}] (sx1) {};
			\node[ splace, right of=tx1n,  label={above:$\bar{x}_1$}] (sx1n) {};
			
			\node[transition, right of=sx1, label={above:$$}] (tc2) {};
			\node[transition, right of=sx1n, label={above:$$}] (tc2n) {};
			
			\node[ splace, right of=tc2, yshift=-5mm,  label={above:$$}] (sc2) {};
			
			\node[right of=sc2, yshift=5mm](h1) {};
			\node[below of=h1](h2) {};
			\node[right of=sc2, xshift=3mm](b1){$\bullet$};
			\node[right of=sc2, xshift=6mm](b2){$\bullet$};
			\node[right of=sc2, xshift=9mm](b3){$\bullet$};
			
			\node[right of=b3,yshift=5mm, xshift=-7mm](h3) {};
			\node[below of=h3](h4) {};

			\node[splace, right of=b3, xshift=3mm] (sc3) {};
			\node[transition, right of=sc3, yshift=5mm, label={above:$$}] (tx3) {};
			\node[transition, below of=tx3, label={above:$$}] (tx3n) {};
			
			\node[ splace, right of=tx3,  label={above:$x_n$}] (sx3) {};
			\node[ splace, right of=tx3n,  label={above:$\bar{x}_n$}] (sx3n) {};
			
			\node[transition, right of=sx3, label={above:$$}] (tc3) {};
			\node[transition, right of=sx3n, label={above:$$}] (tc3n) {};
			
			\node[ splace, right of=tc3, yshift=-5mm,  label={above:$$}] (sc4) {};

			\draw[->] 
			(tx1)  		edge [pre]                            (sc1)
			edge [post]                           (sx1)
			(tx1n) edge [pre]					(sc1)
			edge[post]					(sx1n)
			(tc2) edge[pre]						(sx1)
			edge[post]					(sc2)
			(tc2n) edge[pre]					(sx1n)
			edge[post]					(sc2)
			
			(tx3)  		edge [pre]                            (sc3)
			edge [post]                           (sx3)
			(tx3n) edge [pre]					(sc3)
			edge[post]					(sx3n)
			(tc3) edge[pre]						(sx3)
			edge[post]					(sc4)
			(tc3n) edge[pre]					(sx3n)
			edge[post]					(sc4)
			;
			\draw[->]
			(sc2) -- (h1);
			\draw[->]
			(sc2) -- (h2);
			\draw[->]
			(h3) -- (sc3);
			\draw[->]
			(h4) -- (sc3);
			
			\node[splace, tokens=1, below of=sc1, yshift=-18mm](s1){};
			\node[transition,right of=s1 ](t1){};
			\node[splace,right of=t1, label={above:$x^1_2$}](st1) {};
			\node[transition,above of=t1](t2){};
			\node[splace, right of=t2, label={above:$x^1_1$}] (st2){};
			\node[transition,below of=t1](t3){};
			\node[splace, right of=t3, label={above:$x^1_3$}] (st3){};
			
			\node[transition, right of=st1] (t4){};
			\node[transition, right of=st2] (t5){};
			\node[transition, right of=st3] (t6){};
			
			\node[splace, right of=t4](s2){};
			
			\node[right of=s2, yshift=5mm](h1b) {};
			\node[below of=h1b](h2b) {};
			\node[right of=s2, xshift=3mm](b1b){$\bullet$};
			\node[right of=s2, xshift=6mm](b2b){$\bullet$};
			\node[right of=s2, xshift=9mm](b3b){$\bullet$};
			
			\node[right of=b3b,yshift=5mm, xshift=-7mm](h3b) {};
			\node[below of=h3b](h4b) {};
			
			\node[splace, right of=b3b, xshift=3mm] (s3) {};
			
			\node[transition,right of=s3 ](t1b){};
			\node[splace,right of=t1b, label={above:$x^n_2$}](st1b) {};
			\node[transition,above of=t1b](t2b){};
			\node[splace, right of=t2b, label={above:$x^n_1$}] (st2b){};
			\node[transition,below of=t1b](t3b){};
			\node[splace, right of=t3b, label={above:$x^n_3$}] (st3b){};
			
			\node[transition, right of=st1b] (t4b){};
			\node[transition, right of=st2b] (t5b){};
			\node[transition, right of=st3b] (t6b){};
			
			\node[splace, right of=t4b](s2b){};
			
			\draw[->] 
			(t1)  		edge [pre]                            (s1)
			edge [post]                           (st1)
			(t2) edge [pre]					(s1)
			edge[post]					(st2)
			(t3) edge[pre]						(s1)
			edge[post]					(st3)
			(t4) edge[pre]					(st1)
			edge[post]					(s2)
			
			(t5)  		edge [pre]                            (st2)
			edge [post]                           (s2)
			(t6) edge [pre]					(st3)
			edge[post]					(s2)
			
			(t1b)  		edge [pre]                            (s3)
			edge [post]                           (st1b)
			(t2b) edge [pre]					(s3)
			edge[post]					(st2b)
			(t3b) edge[pre]						(s3)
			edge[post]					(st3b)
			(t4b) edge[pre]					(st1b)
			edge[post]					(s2b)
			
			(t5b)  		edge [pre]                            (st2b)
			edge [post]                           (s2b)
			(t6b) edge [pre]					(st3b)
			edge[post]					(s2b)
			;
			\draw[->]
			(s2) -- (h1b);
			\draw[->]
			(s2) -- (h2b);
			\draw[->]
			(h3b) -- (s3);
			\draw[->]
			(h4b) -- (s3);
			
		\end{tikzpicture}}
		\caption{3-SAT problem $F$ as a Petri game. The top system player chooses sequentially the truth value for $x_1, \ldots x_n$. The bottom system player has to choose a literal for each clause according to the truth values chosen by the top player. The bad markings are $\{ \{x_k, x^i_j\} \mid x^i_j = \bar{x}_k \} \cup \{ \{\bar{x}_k, x^i_j\} \mid x^i_j = x_k \}$. So every time the bottom player chooses a literal that is not chosen by the top system player a bad marking is reached, e.g. the top player chooses $x_1$ and the bottom player $x^1_3$ where $x^1_3 = \bar{x}_1$. }
		\label{fig-np-hard}
	\end{figure}	
	Fig. \ref{fig-np-hard} shows the Petri game of $F$. If $F$ is satisfiable, the top system player chooses the transitions according to the boolean assignment that satisfies $F$, for example $\bar{x}_1$ if the truth value of $x_1$ is false under the boolean assignment. The bottom system player chooses the literal that is true under the boolean assignment in each clause, for example if $x^1_1 = \bar{x}_1$, she may choose $x^1_1$ since the top system player chooses it. Both system players do not know how when the transition of the other player are fired such that it has to be correct for all possible orders of execution, for example the top player could have chosen the truth value for $x_n$ already before the bottom player chose the literal for the first clause. 
	
	Conversely, the top player's choices yield a boolean assignment that satisfies $F$ if the Petri game shown has a winning strategy since the bottom player chooses one literal of each clause that must be true under the boolean assignment.
\end{proof}
Note that the 2-player case is NP-hard even without an environment player. The matching upper bound is established later, along with the complexity of the 4-player case.

The idea of solving 4-player Petri games is to find game states where a winning strategy can be repeated and still win. A few definitions are necessary to formalise repeating a part of a winning strategy. The following definitions about branching processes are equivalent to those in \cite{DBLP:journals/fmsd/EsparzaRV02}. 

%, see Appendix \ref{sec-app-a}. \\
%\todo{appendix explanations}
\textbf{Cut.}
The reachable markings in a branching process $B$ are called \emph{cuts}. A cut is a maximal set of pairwise concurrent places \cite{DBLP:journals/corr/abs-1710-05368}.  

\textbf{Future of a cut.}
We define the branching process $\future(B , C)$ of a cut $C \subseteq \places_B$ as follows: $\places_{\future(B,C)} \cup \transitions_{\future(B,C)}   = \{x \in \places_B \cup \transitions_B \mid \forall p \in C: p \leq x \lor p || x \}$, the mappings $preset$ and $\postset$ are the mappings $\preset_B$ and $\postset_B$ restricted to $\transitions_{\future(B,C)}$, and $\initmarking_{\future(B,C)} = C$. Generally, the \emph{future} $\future(B,C)$ is not a branching process of the underlying net $N_0$ of $B$ (i.e. $\pi_B: \transitions_B \cup \places_B \to \transitions_0 \cup \places_0)$ but it is a branching process of the net $(\places_0, \transitions_0, \preset_0, \postset_0, \pi_B(C))$, i.e. $\future(B, C)$ is a branching process of $N_0$ if $\pi_B(C) = \initmarking_0$. (This definition is equal to the definition of ${\Uparrow}\mathit{Configuration}$ in \cite{DBLP:journals/fmsd/EsparzaRV02}.) 
%, see Appendix \ref{sec-app-a}).\\

Informally speaking, the definition of the future of a cut coincides with the intuition that it is the branching process that follows after that cut.

\textbf{Last known cut and last known marking.}
	The \emph{last known cut} $\lkc(t)$ of a transition $t$ of a branching process $B$ is defined as $\lkc(t) = \{  p \in \places_B \mid p \nless t \land \forall t' \in \preset_B(p): t' \leq t \}$. Informally speaking, this cut is reached by firing all transitions that are causal predecessors of $t$. The \emph{last known marking} $\lkm(t)$ is defined as $\pi_B(\lkc(t))$, which is the marking reached in the underlying Petri net. (The $\lkc$ is the cut of a \emph{local configuration} \cite{DBLP:journals/fmsd/EsparzaRV02})
%	, see Appendix \ref{sec-app-a}).
	
A \emph{cut} and \emph{glue} operation formalises the process of copying the future of one cut to another cut of a strategy. Later, the actual requirements for when to copy are defined.  In the following, $B$ and $B'$ are branching processes of (possibly different) nets.

\textbf{Cut.} 
	$B - B' = (\places_B \setminus (\places_{B'} \setminus \initmarking_{B'}), \transitions_B \setminus \transitions_{B'}, \preset_B\restriction_{\transitions_{B-B'}},  \postset_B\restriction_{\transitions_{B-B'}}, \initmarking_B)$
	
	If $B'$ is the future of a cut of $B$, the cut operation removes $B'$ from $B$ leaving only the initial marking of $B'$ in $B$.

\textbf{Glue.}
$B + B' = (\places_B \cup \places_{B'}, \transitions_B \cup \transitions_{B'}, \preset_{B+B'},  \postset_{B+B'}, \initmarking_B)$, where $\preset_{B+B'}(t) = \begin{cases}
	\preset_B(t) &  t \in \transitions_B \\
	\preset_{B'}(t) &  t \in \transitions_{B'}
\end{cases}$, and analogously $\postset_{B+B'}$.

If the initial marking of $B'$ is a cut in $B$, $B'$ gets glued to that cut.
Generally, $B -B'$ and $B +B'$ are not branching processes. However, if there are two cuts $C_1, C_2 \subseteq \places_B$ with $\pi_B(C_1) = \pi_B(C_2)$ cutting out the future of $C_2$ and glueing an isomorphic copy (this is just a necessary renaming) of the future of $C_1$ to $C_2$ yields a branching process. 

\textbf{Cut and glued branching process.} 
	Let ${\future(B, C_1)}'$ be an isomorphic copy of $\future(B,C_1)$, ${\future(B, C_1)}'$ $ \cong \future(B,C_1)$, such that $\initmarking_{{\future(B, C_1)}'} = C_2$ and  $(\places_B \cup \transitions_B) \cap (\places_{{\future(B, C_1)}'} \cup \transitions_{{\future(B, C_1)}'}) = C_2$. The cut and glued branching process $B_{C_1 \rightarrow C_2}$ is the branching process $B_{C_1 \rightarrow C_2} = B - \future(B, C_2) + {\future(B, C_1)}'$. This definition is unique up to isomorphism.
%	The cut and glued branching process is $\cutandglue(B, C_1, C_2) = B_{C_1 \leftarrow C_2}$, where $
%We fix the notation that ${\future(B, C_1)}'$ is isomorphic to $\future(B, C_1)$ with the properties above.

\begin{definition}[Imitable cuts]
	\label{def-imitable-cuts} Let $\sigma$ be a winning strategy of a Petri game $G$.
	A cut $C_2$ is imitable by a cut $C_1$ if $\sigma_{C_1 \rightarrow C_2}$ is a winning strategy.
%	Let $\sigma$ be a winning strategy of a Petri game $G$ and $C_1, C_2$ two cuts in~$\sigma$. Let $\copyfuture \cong \future(\sigma,C_2)$ be an isomorphic branching process such that $(\places_\sigma \cup \transitions_\sigma) \cap (\places_\copyfuture \cup \transitions_\copyfuture) = \emptyset$. We define the branching $\sigma'$
%	 \emph{imitable} by a cut $C_1 \subseteq \places_\sigma$ if and only if the branching process $\sigma'$ is a winning strategy which is defined as the same branching process as $\sigma$ except for the branching process $\future'(C_2)$. Here, we define $\future'(C_2) \cong \future(C_1)$.
%	This definition is unique up to isomorphism. Note that $\sigma'$ is always a branching process for the cuts $C_1$ and $C_2$ with $\pi(C_1) = \pi(C_2)$.
\end{definition}

%\todo{show initial marking in future is lkc, holds per def}
%\todo{example here imitable cut}
%\todo{review concur}

Now we define the actual cuts that are imitable. The first kind of these cuts are cuts where a subset of players take transitions without communicating to the remaining players until all players of this subset synchronise at a joint transition for the second time.
%A cut is imitable by another cut if all players are at the same places in the underlying Petri net after taking a transition in which all players of this subset participated. These cuts already imply that 2-player Petri games are decidable since it is guaranteed that either one player repeats its place without communicating to the other player, or some joint transition is repeated eventually.
In the following, we fix $\sigma$ as a winning strategy of a Petri game $G$.

\begin{definition}[Partial repetition cuts]	
 \label{def-oneplayerimitablecuts}
		Let $t_1, t_2 \in \transitions_{\sigma}$. The cuts $\lkc(t_1)$ and $\lkc(t_2)$ are \emph{partial repetition cuts}, denoted $\partialrepetition(t_1, t_2)$, if 		
	$\lkm(t_1) = \lkm(t_2) \land \lkc(t_1) \setminus \postset(t_1) = \lkc(t_2) \setminus \postset(t_2) $. 
	The partial repetitions cuts $\partialrepetition(t_1,t_2)$ are called a \emph{loop}, denoted $\loopcuts(t_1, t_2)$ if $t_1 < t_2$. 
\end{definition}

\begin{lemma}[Partial repetition cuts are imitable]	
	\label{lemma-oneplayerimitablecuts}
		For transitions $t_1, t_2 \in \transitions_{\sigma}$, $\partialrepetition(t_1, t_2)$ implies that $\lkc(t_2)$ is imitable by $\lkc(t_1)$ and vice versa.
\end{lemma}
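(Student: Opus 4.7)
The plan is to show that $\sigma' := \sigma_{\lkc(t_1) \to \lkc(t_2)}$ is a winning strategy by verifying in turn the four defining properties. That $\sigma'$ is a branching process of $N_0$ follows from the cut-and-glue definition together with $\pi_\sigma(\lkc(t_1)) = \lkm(t_1) = \lkm(t_2) = \pi_\sigma(\lkc(t_2))$.

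The central technical device I would introduce is a folding map $\mu \colon \sigma' \to \sigma$ that is the identity on the kept part $\sigma - \future(\sigma, \lkc(t_2))$ and undoes the isomorphic renaming on the glued copy of $\future(\sigma, \lkc(t_1))$, sending each of its nodes back to its pre-image in $\future(\sigma, \lkc(t_1)) \subseteq \sigma$. The background $D := \lkc(t_1) \setminus \postset(t_1) = \lkc(t_2) \setminus \postset(t_2)$ is shared by both cuts, so $\mu$ acts as the identity on $D$; on the $\postset(t_2)$-part of the glued copy's initial marking it maps each place to the corresponding place of $\postset(t_1)$, which is consistent because $\lkm(t_1) = \lkm(t_2)$ equates the $N_0$-images of $\postset(t_1)$ and $\postset(t_2)$. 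The essential property is that $\mu$ preserves presets and postsets and satisfies $\pi_\sigma \circ \mu = \pi_{\sigma'}$. Consequently every firing sequence in $\sigma'$ translates via $\mu$ to a firing sequence in $\sigma$, and for every $M' \in \reachablemarkings(\sigma')$ there is an $M \in \reachablemarkings(\sigma)$ with $\pi_{\sigma'}(M') = \pi_\sigma(M)$.

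From this correspondence the four winning properties transfer to $\sigma'$: \emph{Safety} because $\pi_\sigma(M) \notin \mathcal{B}$; \emph{Determinism} because the enabled outgoing transitions of a system place in $\sigma'$ project under $\mu$ to enabled outgoing transitions of its image in $\sigma$, and $\sigma$ admits at most one; \emph{Deadlock-avoiding} because any transition $N_0$-enabled at $\pi_{\sigma'}(M') = \pi_\sigma(M)$ is also enabled at $M$ in $\sigma$, and any witnessing transition in $\sigma$ lifts through $\mu$ to one enabled at $M'$; and \emph{Justified refusal} because a concurrent set $C \subseteq \places_{\sigma'}$ with $\pi_{\sigma'}(C) = \preset_0(t)$ and no $\sigma'$-lift folds to a concurrent set $\mu(C) \subseteq \places_\sigma$ with the same $N_0$-image and no $\sigma$-lift, so justified refusal in $\sigma$ yields a system place in $\mu(C)$ refusing $t$, whose $\mu$-preimage in $C$ witnesses justified refusal in $\sigma'$. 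The converse direction, that $\lkc(t_1)$ is imitable by $\lkc(t_2)$, then follows by the symmetric argument with the roles of $t_1$ and $t_2$ interchanged.

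The main obstacle will be making the folding $\mu$ rigorous at the cut boundary. Cuts and reachable markings in $\sigma'$ may straddle the kept part and the glued copy: transitions whose presets lie entirely in $D$ can fire either as kept-part transitions or as glued-copy transitions, and a cut may mix $D$-places with $\postset(t_2)$-places contributed by the glued copy. Checking that firing sequences fold consistently and that ``missing lifts'' in $\sigma'$ correspond to missing lifts in $\sigma$ (rather than to transitions that were simply removed with $\future(\sigma, \lkc(t_2))$) is where the background-sharing condition $\lkc(t_1) \setminus \postset(t_1) = \lkc(t_2) \setminus \postset(t_2)$ pays off: it guarantees that no background player's local view changes under the cut-and-glue, so enabledness and refusal at $D$-places agree in both futures and $\mu$ commutes with the token game.
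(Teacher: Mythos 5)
Your overall strategy---transferring any violation in $\sigma' = \sigma_{\lkc(t_1)\to\lkc(t_2)}$ back to a violation in $\sigma$---is the same as the paper's, but the central device you build it on does not exist. A folding $\mu\colon\sigma'\to\sigma$ that is the identity on the kept part and undoes the renaming on the glued copy cannot preserve presets and postsets: you need $\mu(t_2)=t_2$ (the transition $t_2$, its preset and its entire causal history lie in the kept part, where $\mu$ is the identity), yet you also need $\mu$ to send $\postset(t_2)$---the initial marking of the glued copy---to $\postset(t_1)$ so that the glued transitions fold onto $\future(\sigma,\lkc(t_1))$. These two requirements contradict postset-preservation at $t_2$, because $\postset(t_1)$ and $\postset(t_2)$ are \emph{different} sets of places of $\sigma$; only their $\pi$-images agree. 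The background condition $\lkc(t_1)\setminus\postset(t_1)=\lkc(t_2)\setminus\postset(t_2)$, which you invoke to resolve the seam, covers only the literally shared places $D$ and says nothing about the $\postset(t_1)$-versus-$\postset(t_2)$ mismatch, which is exactly where the seam sits. Consequently your key claim fails: a firing sequence of $\sigma'$ that fires $t_2$ and then a glued transition does not translate under $\mu$ into a firing sequence of $\sigma$, since after firing $t_2$ in $\sigma$ the tokens lie on $\postset(t_2)$ while the $\mu$-image of the next glued transition consumes from $\postset(t_1)$. Everything downstream---the reachable-marking correspondence and the transfer of all four properties---rests on this broken step.

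The paper repairs exactly this point with a dichotomy on the violating cut $C$ instead of a global homomorphism: either some $p\in C$ satisfies $t_2\le p$, in which case $C$ lies entirely inside $\future(\sigma',\lkc(t_2))\cong\future(\sigma,\lkc(t_1))$ and the violation transfers through that isomorphism to a cut of $\sigma$ reachable from $\lkc(t_1)$; or no place of $C$ is a causal successor of $t_2$, in which case the surgery has only renamed the nodes around $C$ and the violation was already present in $\sigma$. (Mixed cuts cannot occur: a place concurrent with a causal successor of $t_2$ and not in conflict with $\lkc(t_2)$ is itself in the future of $\lkc(t_2)$.) If you replace your global $\mu$ by this case split---or, equivalently, define the folding only piecewise and establish the marking correspondence separately for the two kinds of cuts---your verification of the four properties goes through essentially as you wrote it.
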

\begin{proof}[Proof by contradiction]	
	We show that $\sigma_{\lkc(t_1) \rightarrow \lkc(t_2)} = \sigma - \future(\sigma, \lkc(t_2)) + {\future(B, \lkc(t_1))}'$ is a winning strategy. Suppose that there exists a cut $C \subseteq \places_{\sigma_{\lkc(t_1)\rightarrow \lkc(t_2)}}$ such that one of the properties of the winning strategy is violated.
	
	We distinguish two cases: The first case is that there exists a place $p \in C$ such that  $t_2 \leq p$. Then, $C \subseteq \places_{\future(\sigma_{\lkc(t_1) \rightarrow \lkc(t_2)}, \lkc(t_2)), }$ holds. Since $\future(\sigma_{\lkc(t_1) \rightarrow \lkc(t_2)} , \lkc(t_2)) = \future(\sigma, \lkc(t_1))' \cong \future(\sigma, \lkc(t_1))$ it follows directly that $\sigma$ is not a winning strategy if $\sigma_{\lkc(t_1) \rightarrow \lkc(t_2)}$  is not winning. 
	
%	\todo{Grund warum nichts anderes verändert wird?}
	The second case is that there exists no place $p \in C$ such that $t_2 \leq p$, i.e. $\forall p \in C: p \leq t_2 \lor p \# t_2 \lor p || t_2$ holds. Since the last known cuts of $t_1$ and $t_2$ are equal except for $\postset(t_1)$ and $\postset(t_2)$ the branching processes $\future(\sigma, \lkc(t_2))$ and $\future(\sigma , \lkc(t_1))$ are isomorphic up to the nodes that are causal successors of $t_1$ and $t_2$ respectively. This means that all transitions and places that are not causal successors of $t_2$ are only renamed by constructing $\sigma_{\lkc(t_1) \rightarrow \lkc(t_2)}$. Since there is no place in $C$ that is a causal successor of $t_2$ there is no transition enabled in $C$ that could have been added or removed. It follows that $\sigma$ is not a winning strategy, if $\sigma_{\lkc(t_1) \rightarrow \lkc(t_2)} $ is not a winning strategy. 
\end{proof}

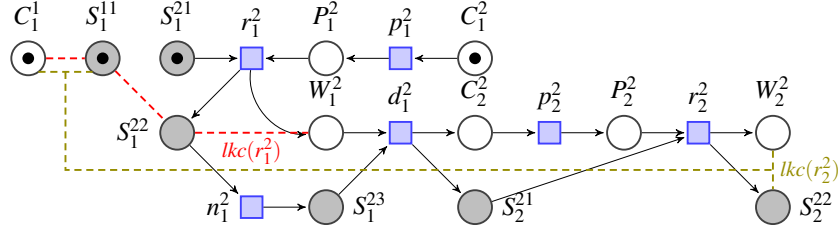
\begin{figure}
	\centering
\scalebox{0.9}{
	\begin{tikzpicture}[node distance=1.1cm,>=stealth',bend angle=45,auto]
		\node[splace, tokens=1, label={above:$S^{21}_1$}] (sidle2) {};
\node[splace, below of=sidle2, label={ left:$S^{22}_1$}] (sr2) {};
\node[transition, right of=sidle2, label={above:$r_1^2$}] (tcom2){};
\node[eplace, right of=tcom2, label={above:$P_1^2$}] (openr2) {};
\node[transition, right of=openr2, label={above:$p^2_1$}] (topenr2) {};
\node[tokens=1, eplace, right of=topenr2,  label={above:$C_1^2$}] (eclosed2) {};

\node[eplace, below of=openr2, label={above:$W_1^2$}](waitr2) {};
\node[transition,below of=sr2, xshift=11mm, opacity=1, label={left:$n^2_1$}](tnosyscom2) {};

\node[splace, right of=tnosyscom2, label={[opacity=1]right:$S^{23}_1$}, opacity=1](saction2) {};
\node[transition, right of=waitr2, xshift=-0mm,label={[opacity=1]above:$d_1^2$}, opacity=1 ] (tdeny2) {};

\node[ eplace, right of=tdeny2,  label={[opacity=1]above:$C^2_2$}, opacity=1] (eclosed2c) {};
\node[splace, below of=eclosed2c, label={[opacity=1]right:$S^{21}_2$}, opacity=1] (sidle2c) {};

\node[transition, right of = eclosed2c, label={above:$p^2_2$}] (t1) {};
\node[eplace, right of=t1, label={above:$P_2^2$}] (e1) {};
\node[transition, right of=e1, label={above:$r_2^2$}] (t2){};

\node[eplace, right of=t2, label={above:$W_2^2$}](e2) {};
\node[splace, below of=e2, label={ right:$S^{22}_2$}] (s1) {};

\node[splace, left of=sidle2, tokens=1, label={above:$S^{11}_1$}](s2) {};
\node[eplace, left of=s2, tokens=1, label={above:$C^{1}_1$}](e3) {};

\node[right of= sr2, yshift=-2.75mm](l1) {\footnotesize \color{red}$\lkc(r^2_1)$};
\node[right of= e2, yshift=-5.5mm, xshift=-5.5mm](l1) {\footnotesize \color{olive}$\lkc(r^2_2)$};
\draw[->, opacity=1] 
(t1) edge[pre] (eclosed2c)
	edge[post] (e1)
(t2) edge[pre] (e1)
edge[pre] (sidle2c)
edge[post] (e2)
edge[post] (s1)
(tnosyscom2)		edge [pre]							(sr2)
edge [post]							(saction2)
(tdeny2)			edge [pre]							(waitr2)
edge [pre]							(saction2)
edge [post]							(sidle2c)
edge [post]							(eclosed2c)

;
\draw [->]
(topenr2)  		edge [pre]                            (eclosed2)
edge [post]                           (openr2)
(tcom2)  		edge [pre]                            (sidle2)
edge [pre]							  (openr2)
edge [post, bend right=50]                           (waitr2)
edge [post]                           (sr2)
;
\draw[->, dashed] 
;

\draw[densely dashed, red, thick]
	(e3) -- (s2) -- (sr2) -- (waitr2);
\draw[densely dashed, olive, thick]
	(e2) -- (s1);
\draw[densely dashed, olive, thick]
(-2.05,-0.2) -- (-1.25,-0.2);

\draw[densely dashed, olive, thick]
(-1.65,-0.2) -- (-1.65,-1.65) -- (8.8, -1.65);
	
\end{tikzpicture}}
\caption{Example of partial repetition cuts. Assume that this is a part of a winning strategy where the second system player denies ($d^2_1$) the first request ($r^2_1$) to open the door without communicating with the other system player via transition $t$. Here, the last known cuts of $r^2_1$ and $r^2_2$ are $\lkc(r^2_1) = \{C^1_1, S^{11}_1, S^{22}_1, W^2_1\}$ and $\lkc(r^2_2) =  \{C^1_1, S^{11}_1, S^{22}_2, W^2_2\}$, so $\lkm(r^2_1) = \lkm(r^2_2)$. Thus, a partial repetition cut (that is a loop) occurs  $\partialrepetition_{\mathit{loop}}(r^2_1, r^2_2)$.
%	 i.e. a winning strategy can always deny a request without taking the joint transition $t$ of the game $G$ in Fig. \ref{fig-example-introduction}.
}
\label{fig-prc-example}
\end{figure}
In Fig. \ref{fig-prc-example} is an example of partial repetition cuts.
Note that not all of those partial repetition cuts are loops as the players can reach the same places in the underlying Petri net by taking different transitions. Partial repetition cuts are defined regardless of the number of players in a Petri game, i.e, these cuts are imitable in any Petri game.

\textbf{Maximally repeated strategy.}
 Since the nodes of a branching process are countable we can construct by induction over the partial repetition cuts of a winning strategy $\sigma$ a winning strategy $\sigma^{\mathit{pr}}$ where $\partialrepetition(t_1, t_2)$ implies that $\future(\sigma^{\mathit{pr}}, t_1) \cong \future(\sigma^{\mathit{pr}}, t_2)$. 
  $\sigma^{\mathit{pr}}$ denotes a \emph{maximally repeated strategy} of $\sigma$.

The partial repetition cuts have a useful implication for the case with up to~4 players: if two players do not take any joint transition with one of the two remaining players they will get to a partial repetition cut eventually. 
The idea of the following definition of a synchronisation segment is based on this implication.
Informally speaking, a synchronisation segment describes the part starting from the last known cut of a transition $t$ until all other players are causal successors of this transition or the strategy can be repeated due to the partial repetition cuts. Here, the idea of when to repeat a strategy is that if all players play identically starting from the last known cut of transition $t$ until they get to know of transition $t$ the strategy can be repeated after transition $t$.

\begin{definition}[Synchronisation segment]
The \emph{synchronisation segment} $\synchronisationsegment(t)$ of a transition $t \in \transitions_B$ of a branching process $B$ is defined as a smallest branching process (with respect to $\leq$) such that \\
(1.) $\synchronisationsegment(t)\leq \future(B, \lkc(t))$ and \\ (2.) $\forall t' \in \transitions_{\future(B, \lkc(t))}: t' \notin \transitions_{\synchronisationsegment(t)} \Rightarrow  $ (a) ${ (\forall p \in \preset_B(t'): t \leq p}) \lor $ \\ \hspace*{5.72cm} (b) ${(\exists t_1, t_2, t_3 \in \transitions_{\synchronisationsegment(t)}: \loopcuts(t_1, t_2) \land t_3 \leq t_2 \land \partialrepetition(t', t_3))}$ 
\end{definition}
%\todo{def seg(t) anpassen nur einmal loop}

For a transition $t' \in \future(B, \lkc(t))$ that is not in $\transitions_{\synchronisationsegment(t)}$, all places in its preset are causal successors of $t$ (2.a) or there is a transition $t_3$ with $\partialrepetition(t', t_3)$ within a loop (2.b). So, the parts of a winning strategy that get repeated due to partial repetition cuts are included in the synchronisation segment. An example of a synchronisation segment is shown in Fig. \ref{fig-example-src} and Fig. \ref{fig-prc-example-app2}. \emph{Synchronisation equivalent cuts} are those cuts where the synchronisation segments are isomorphic. 

\begin{figure}[t!]
	\begin{center}
		\scalebox{0.9}{
			\begin{tikzpicture}[node distance=1.1cm,>=stealth',bend angle=45,auto]
				
				\node[eplace, label={above:$P^1_1$}] (openr) {};
				\node[transition, left of=openr, label={above:$p^1_1$}] (topenr) {};
				\node[tokens=1, eplace, left of=topenr,  label={above:$C_1^1$}] (eclosed) {};

				\node[transition, right of=openr, label={above:$r_1^1$}] (tcom){};

				\node[splace, tokens=1, right of=tcom, label={above:$S^{11}_1$}  ] (sidle) {};
				\node[splace, below of=sidle,  label={above right:$S^{12}_1$}] (sr) {};
				
				\node[eplace, below of=openr, label={below:$W^1_1$}](waitr) {};

				\node[splace, below of=sr,  label={left:$S^{13}_2$}](sactionb) {};

				\node[transition, below of=sactionb,xshift=-4mm,label={right:$o^1_2$} ] (topenb) {};
				
				\node[eplace,  below of=topenb, label={left:$O^1_2$} ] (eopenb) {};
				\node[splace,right of=eopenb,  label={right:$S^{14}_2$}] (scloseb) {};
				
%				\node[transition, below of=eopenb, label={right:$c^1_2$} ] (tcloseb) {};
%				\node[ eplace, below of=tcloseb,  label={below:$C^1_5$}] (eclosede) {};
%				\node[splace, right of=eclosede,  label={below:$S^{11}_5$}] (sidlee) {};
				
				\node[transition, right of=sr, label={below:\color{blue} $t_1$}] (tsyscom) {};

				\draw[->]
				(topenr)  		edge [pre]                            (eclosed)
				edge [post]                           (openr)
				(tcom)  		edge [pre]                            (sidle)
				edge [pre]							  (openr)
				edge [post, bend left]                           (waitr)
				edge [post]                           (sr)
				(topenb)			edge [pre]							(waitr)
				edge [pre]							(sactionb)
				edge [post]							(scloseb)
				edge [post]							(eopenb)
%				(tcloseb)		edge [pre]							(eopenb)
%				edge [pre]							(scloseb)
%				edge [post]				(eclosede)
%				edge [post]							(sidlee)
				;

				\node[splace, tokens=1, right of=sidle, xshift=11mm, label={above:$S^{21}_1$}] (sidle2) {};
				\node[splace, below of=sidle2, label={above left:$S^{22}_1$}] (sr2) {};
				\node[transition, right of=sidle2, label={above:$r_1^2$}] (tcom2){};
				\node[eplace, right of=tcom2, label={above:$P_1^2$}] (openr2) {};
				\node[transition, right of=openr2, label={above:$p^2_1$}] (topenr2) {};
				\node[tokens=1, eplace, right of=topenr2,  label={above:$C_1^2$}] (eclosed2) {};
				
				\node[eplace, below of=openr2, label={below:$W_1^2$}](waitr2) {};

				\node[splace, right of=sactionb, xshift=11mm, label={right:$S^{23}_2$}](saction2b) {};
				
				\node[transition, below of=saction2b, xshift=4mm,label={right:$d_2^2$} ] (tdeny2b) {};
				
				\node[splace, below of=tdeny2b,label={left:$S^{21}_4$}] (sidle2d) {};
				\node[ eplace, right of=sidle2d,  label={right:$C_4^2$}] (eclosed2d) {};
				\node[left of=topenb, xshift=-5.5mm](l) {\color{blue}$\synchronisationsegment(t_1)$} ;
				
				\node[right of=eclosed2d, yshift=11mm, xshift= 16.5mm, align=left](l3){\scriptsize No more transitions \\ \scriptsize are added due to 2.a.};
	
				\draw [->]
				(topenr2)  		edge [pre]                            (eclosed2)
				edge [post]                           (openr2)
				(tcom2)  		edge [pre]                            (sidle2)
				edge [pre]							  (openr2)
				edge [post, bend right=50]                           (waitr2)
				edge [post]                           (sr2)
				(tsyscom)		edge [pre]			(sr)
				edge [pre]			(sr2)
				edge [post]			(sactionb)
				edge [post]			(saction2b)	
				(tdeny2b)			edge [pre]							(waitr2)
				edge [pre]							(saction2b)
				edge [post]							(sidle2d)
				edge [post]							(eclosed2d)
				;
				\draw[dashed, blue] 
				(-0.55,-0.55) -- (0.55,-0.55) -- (1.65, -1.65) -- (4.95, -1.65) --
				(6.05, -0.55) -- (7.15, -0.55) -- (7.15 , -4.95) -- (-0.55, -4.95) --(-0.55, -0.55);
				
			\end{tikzpicture}
		}
	\end{center}
	\caption{A subprocess of $\sigma_G$ of Fig. \ref{fig-example-introduction2} is shown. The nodes inside the dashed, blue box are the nodes of the synchronisation segment $\synchronisationsegment(t_1)$. The initial marking is $\initmarking_{\synchronisationsegment(t_1)} = \{W^1_1, S^{13}_2, S^{23}_2, W^2_1\}$.
	 This segment ends after the transitions $o^1_2$ and $d^2_2$ since $t_1 < O^1_2,  S^{14}_2, S^{21}_4, C^2_4$. So, if $\sigma_G$ repeats to open the first door and keeping the second door closed after taking the transition $t$ the winning strategy could repeat itself due to synchronisation equivalent cuts.
   }
	\label{fig-example-src}
\end{figure}

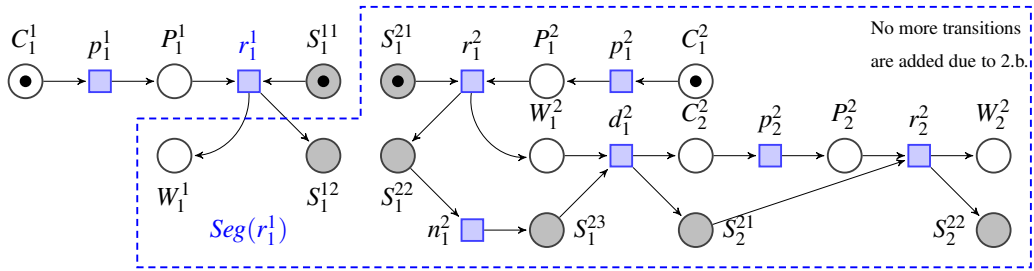
\begin{figure}[ht]
	\centering
	\scalebox{0.9}{
		\begin{tikzpicture}[node distance=1.1cm,>=stealth',bend angle=45,auto]
			\node[splace, tokens=1, label={above:$S^{21}_1$}] (sidle2) {};
			\node[splace, below of=sidle2, label={ below:$S^{22}_1$}] (sr2) {};
			\node[transition, right of=sidle2, label={above:$r_1^2$}] (tcom2){};
			\node[eplace, right of=tcom2, label={above:$P_1^2$}] (openr2) {};
			\node[transition, right of=openr2, label={above:$p^2_1$}] (topenr2) {};
			\node[tokens=1, eplace, right of=topenr2,  label={above:$C_1^2$}] (eclosed2) {};
			
			\node[eplace, below of=openr2, label={above:$W_1^2$}](waitr2) {};
			\node[transition,below of=sr2, xshift=11mm, opacity=1, label={left:$n^2_1$}](tnosyscom2) {};

			\node[splace, right of=tnosyscom2, label={[opacity=1]right:$S^{23}_1$}, opacity=1](saction2) {};
			\node[transition, right of=waitr2, xshift=-0mm,label={[opacity=1]above:$d_1^2$}, opacity=1 ] (tdeny2) {};

			\node[ eplace, right of=tdeny2,  label={[opacity=1]above:$C^2_2$}, opacity=1] (eclosed2c) {};
			\node[splace, below of=eclosed2c, label={[opacity=1]right:$S^{21}_2$}, opacity=1] (sidle2c) {};
			
			\node[transition, right of = eclosed2c, label={above:$p^2_2$}] (t1) {};
			\node[eplace, right of=t1, label={above:$P_2^2$}] (e1) {};
			\node[transition, right of=e1, label={above:$r_2^2$}] (t2){};
			
			\node[eplace, right of=t2, label={above:$W_2^2$}](e2) {};
			\node[splace, below of=e2, label={ left:$S^{22}_2$}] (s1) {};
			
			\node[splace, left of=sidle2, tokens=1, label={above:$S^{11}_1$}](s2) {};
			
			\node[transition, left of=s2, label={above:\color{blue} $r^1_1$}](tn){};
			\node[eplace, left of=tn, label={above:$P^{1}_1$}](e3) {};
			
			\node[transition, left of=e3, label={above:$p^1_1$}](tn2){};
			\node[eplace, left of=tn2, tokens=1, label={above:$C^{1}_1$}](e4) {};
			
			\node[eplace, below of=e3, label={below:$W_1^1$}](w1) {};
			\node[splace, below of=s2, label={ below:$S^{12}_1$}] (sn2) {};
			
			\node[above of=e2,align=left, yshift=5.5mm, xshift=-6.5mm](l3){\scriptsize No more transitions \\ \scriptsize are added due to 2.b.};
			
			%			\node[right of= sr2, yshift=-2.75mm](l1) {\footnotesize \color{red}$\lkc(r^2_1)$};
			%			\node[right of= e2, yshift=-5.5mm, xshift=-5.5mm](l1) {\footnotesize \color{olive}$\lkc(r^2_2)$};
			\node[below of=w1, xshift=11mm](label) {\color{blue}$\synchronisationsegment(r^1_1)$};
			
			\draw[->, opacity=1] 
			(tn2) edge[pre] (e4)
			edge[post] (e3)
			(tn) edge[pre] (e3)
			edge[pre] (s2)
			edge[post, bend left] (w1)
			edge[post] (sn2)
			(t1) edge[pre] (eclosed2c)
			edge[post] (e1)
			(t2) edge[pre] (e1)
			edge[pre] (sidle2c)
			edge[post] (e2)
			edge[post] (s1)
			(tnosyscom2)		edge [pre]							(sr2)
			edge [post]							(saction2)
			(tdeny2)			edge [pre]							(waitr2)
			edge [pre]							(saction2)
			edge [post]							(sidle2c)
			edge [post]							(eclosed2c)
			
			;
			\draw [->]
			(topenr2)  		edge [pre]                            (eclosed2)
			edge [post]                           (openr2)
			(tcom2)  		edge [pre]                            (sidle2)
			edge [pre]							  (openr2)
			edge [post, bend right=50]                           (waitr2)
			edge [post]                           (sr2)
			;
			\draw[->, dashed] 
			;
			
			\draw[densely dashed, thick, blue]
			(-3.85, -0.55) -- (-0.55, -0.55) -- (-0.55, 1.1) -- (9.35, 1.1) -- (9.35, -2.75)
			-- (-3.85, -2.75) -- (-3.85, -0.55)
			;
			
			%			\draw[densely dashed, red, thick]
			%			(e3) -- (s2) -- (sr2) -- (waitr2);
			%			\draw[densely dashed, olive, thick]
			%			(e2) -- (s1);
			%			\draw[densely dashed, olive, thick]
			%			(-2.05,-0.2) -- (-1.25,-0.2);
			%			
			%			\draw[densely dashed, olive, thick]
			%			(-1.65,-0.2) -- (-1.65,-1.65) -- (8.8, -1.65);
			%			
	\end{tikzpicture}}
	\caption{
		A branching process of the Petri game $G$ of Fig. \ref{fig-example-introduction} is shown. The nodes inside the dashed, blue box are the nodes of the synchronisation segment $\synchronisationsegment(r^1_1)$.
	}
	\label{fig-prc-example-app2}
\end{figure}

\begin{definition}[Synchronisation equivalent cuts]
	Let $\sigma^{\mathit{pr}}$ be a maximally repeated winning strategy of a Petri game $G$ and $t_1, t_2 \in \transitions_{\sigma^{\mathit{pr}}}$. The cuts $\lkc(t_1)$ and $\lkc(t_2)$ are \emph{synchronisation equivalent cuts}, denoted $\synchrequivcuts(t_1, t_2)$, if 	$ \synchronisationsegment(t_1) \cong \synchronisationsegment(t_2)$. $\synchrequivcuts(t_1, t_2)$ is called an \emph{s-loop}, denoted $\synchrequivcuts_{\mathit{loop}}$, if $t_1 < t_2$.
\end{definition}

%For Petri games with up to 4 players, we will see that these are exponentially bounded in their number of nodes.
% In the following, we assume that the given winning strategy repeats all its partial repetition cuts. This is a valid assumption because we can construct such a winning strategy from a given winning strategy by repeating the partial repetition cuts one after the other.
Now, we show that a winning strategy can be repeated after synchronisation equivalent cuts. 
%\todo{why name synchronisation cuts?}
 
\begin{lemma}[Synchronisation equivalent cuts are imitable]
	\label{lemma-synch-segments}
	For transitions $t_1, t_2 \in \transitions_{\sigma^{\mathit{pr}}}$, $\synchrequivcuts(t_1, t_2)$ implies that $\lkc(t_2)$ is imitable by $\lkc(t_1)$ and vice versa.
\end{lemma}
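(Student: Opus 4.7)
The plan is to mirror the contradiction argument of Lemma~\ref{lemma-oneplayerimitablecuts}, but to leverage the structure of the synchronisation segment whenever a putative violating cut straddles both sides of $\lkc(t_2)$. Set $\sigma' = \sigma^{\mathit{pr}}_{\lkc(t_1) \to \lkc(t_2)}$ and suppose for contradiction that $\sigma'$ is not a winning strategy, fixing a reachable marking $C$ of $\sigma'$ that witnesses the failure of one of the four strategy properties. The two extreme cases of Lemma~\ref{lemma-oneplayerimitablecuts} carry over verbatim. If no place of $C$ is a causal successor of $t_2$ in $\sigma'$, then the cut-and-glue operation leaves a neighbourhood of $C$ untouched, so $C$ is also a reachable marking of $\sigma^{\mathit{pr}}$ exhibiting the same violation. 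If, on the other hand, $C$ lies entirely above $\lkc(t_2)$, that is, inside the copied future ${\future(\sigma^{\mathit{pr}}, \lkc(t_1))}'$, then transporting $C$ along the defining isomorphism ${\future(\sigma^{\mathit{pr}}, \lkc(t_1))}' \cong \future(\sigma^{\mathit{pr}}, \lkc(t_1))$ yields a reachable marking of $\sigma^{\mathit{pr}}$ (reached from $\lkc(t_1)$) with the same violation.

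The remaining and main case is when $C$ straddles $\lkc(t_2)$. My plan here is to construct a reachable marking $C^{\star}$ of $\sigma^{\mathit{pr}}$ by replacing each place of $C$ that sits strictly above $t_2$ in $\sigma'$ with a corresponding place sitting above $t_1$ in $\sigma^{\mathit{pr}}$, and then to argue that $C^{\star}$ inherits the violation. The correspondence $C \to C^{\star}$ is assembled from three compatible local bijections dictated by the definition of $\synchronisationsegment$. Places of $C$ inside $\synchronisationsegment(t_2)$ translate via the isomorphism $\synchronisationsegment(t_1) \cong \synchronisationsegment(t_2)$ granted by $\synchrequivcuts(t_1, t_2)$. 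Places outside the segment whose generating transitions fall under clause (2.a), i.e., those whose presets lie entirely above $t_2$, translate via the copying isomorphism already used in the preceding case. Finally, places outside the segment whose generating transitions fall under clause (2.b), i.e., partial repetitions of transitions inside a loop of $\synchronisationsegment(t_2)$, translate via the isomorphism of futures furnished by $\sigma^{\mathit{pr}}$ being maximally repeated, appealing once more to Lemma~\ref{lemma-oneplayerimitablecuts}.

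The hard part will be verifying that these three partial correspondences glue into a single well-defined map whose image $C^{\star}$ is a genuine cut of $\sigma^{\mathit{pr}}$ reflecting the very same violation. One must preserve concurrency between places of the three different kinds, so that $C^{\star}$ is pairwise concurrent and maximal, and one must show that the transitions enabled at $C$ in $\sigma'$ correspond bijectively under $\pi$ to the transitions enabled at $C^{\star}$ in $\sigma^{\mathit{pr}}$; this last point is what transports each of the four possible property violations (justified refusal, safety, determinism, deadlock avoiding) back to $\sigma^{\mathit{pr}}$. I expect the bookkeeping for clause-(2.b) places to be the most delicate, since the loop-repeated copies produced by the maximal repetition of $\sigma^{\mathit{pr}}$ must be glued consistently with the segment isomorphism across every loop visited between $\lkc(t_2)$ and $C$, and their identification up to isomorphism has to be tracked carefully to exclude spurious conflicts when the three kinds of places are combined.
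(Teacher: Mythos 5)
Your proposal follows the paper's own route: form $\sigma^{\mathit{pr}}_{\lkc(t_1)\to\lkc(t_2)}$, use the isomorphism $\synchronisationsegment(t_1)\cong\synchronisationsegment(t_2)$ together with the maximal repetition of $\sigma^{\mathit{pr}}$ (covering the clause-(2.b) transitions) to argue that the replacement is only a renaming away from the causal successors of $t_2$, and then fall back on the two-case argument of Lemma~\ref{lemma-oneplayerimitablecuts}. The one structural difference is your third, ``straddling'' case, which is in fact empty: if a reachable marking $C$ of $\sigma^{\mathit{pr}}_{\lkc(t_1)\to\lkc(t_2)}$ contains a causal successor of $t_2$, then the configuration reaching $C$ contains $t_2$ and hence all of its causal predecessors, so $C$ lies entirely inside $\future(\sigma^{\mathit{pr}}_{\lkc(t_1)\to\lkc(t_2)},\lkc(t_2))={\future(\sigma^{\mathit{pr}},\lkc(t_1))}'$ and is already covered by your second case; the dichotomy ``$C$ contains a causal successor of $t_2$ or it does not'' is exhaustive. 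The three glued bijections you set up for the straddling case are not wasted, though: they are precisely what is needed to justify your first case, where you assert that the cut-and-glue leaves a neighbourhood of $C$ ``untouched'' --- places of $C$ lying in $\future(\sigma^{\mathit{pr}},\lkc(t_2))$ but not above $t_2$ \emph{are} removed and replaced by copies, and the segment isomorphism plus the $\partialrepetition$-induced isomorphisms are exactly the renaming that identifies $C$ with a cut of $\sigma^{\mathit{pr}}$ carrying the same violation. This is also the step the paper itself asserts rather than verifies in detail, so modulo relocating that argument from your third case to your first, the two proofs coincide.
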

	
\begin{proof}
	We show that $\sigma^{\mathit{pr}}_{\lkc(t_1) \rightarrow \lkc(t_2)} = \sigma^{\mathit{pr}} - \future(\sigma, \lkc(t_2)) + {\future(\sigma, \lkc(t_1))}'$ is a winning strategy.
	Since $\synchronisationsegment(t_1) \cong \synchronisationsegment(t_2)$ the construction is only a renaming for the nodes in the synchronisation segment and for those that get repeated due to the partial repetition cuts.
	Thus, only nodes that are causal successors of $t_2$ may be removed or added.
	Now, the proof works in the same way as the proof of Lemma \ref{lemma-oneplayerimitablecuts}, that partial repetition cuts are imitable.
\end{proof}
%\begin{figure}
%	%Abbildung anpassen 
%	\centering
%	\begin{tikzpicture}[node distance=1.5cm,>=stealth',bend angle=45,auto]
	%	\node[transition, label={right:$t_C$}](tc){};
	%	\node[eplace, below of=tc, xshift=-7.5mm](e1){};
	%	\node[eplace, right of=e1](e2){};
	%		\node[eplace, right of=e2,yshift=15mm](e3){};
	%			\node[eplace, right of=e3](e4){};
	%			
	%		\node[below of=e3, xshift=7.5mm](b1){$\bullet$};
	%	\node[below of=e3, xshift=7.5mm, yshift=3mm](b2){$\bullet$};
	%	\node[below of=e3, xshift=7.5mm, yshift=-3mm](b3){$\bullet$};
	%	
	%	\node[below of=e1, xshift=14.5mm,yshift=6mm](b1b){$\bullet$};
	%	\node[below of=e1, xshift=11mm, yshift=7mm](b2b){$\bullet$};
	%	\node[below of=e1, xshift=7.5mm, yshift=8mm](b3b){$\bullet$};
	%	
	%	\node[above of=b1,xshift=-7.5mm, yshift=-10.5mm](h1) {};
	%	\node[right of=h1](h2) {};
	%	
	%	\node[below of=b3,xshift=-7.5mm, yshift=10.5mm](h3) {};
	%	\node[right of=h3](h4) {};
	%	
	%	\node[eplace, below of=b3, xshift=-7.5mm,yshift=3mm] (e3b) {};
	%	\node[eplace, right of=e3b](e4b){};
	%			
	%	\draw[dashed,red]
	%	(e1) -- (e2) -- (e3) -- (e4) 
	%	;
	%	\draw[]
	%	(tc) edge [post] (e1)
	%		edge [post] (e2)
	%	;
	%	\end{tikzpicture}
%	\caption{Schematic representation of a synchronisation segment.}
%\end{figure}
%\begin{definition}[Winning prefix]
%	
%\end{definition}
We show that 4-player Petri games can be solved in NEXP with the help of Lemma \ref{lemma-oneplayerimitablecuts} and Lemma \ref{lemma-synch-segments}. A winning prefix of sufficient size is guessed from which a winning strategy can be constructed.
A branching process $B$ is a \emph{winning prefix} if there exists a winning strategy $\sigma$ with $B \leq \sigma$.

\begin{theorem}
	\label{theorem-complexity-petri-games}
	The synthesis problem of 2-player Petri games is NP-complete and in NEXP for 3- and 4-player Petri games.
\end{theorem}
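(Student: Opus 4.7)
The plan is to guess a finite winning prefix of bounded size and verify locally that it can be extended to a winning strategy, using Lemmas~\ref{lemma-oneplayerimitablecuts} and~\ref{lemma-synch-segments} to control the size of that prefix. Together with the NP-hardness reduction in Lemma~\ref{lemma-2playernphard}, this will give NP-completeness in the 2-player case.

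The first step is to start from an arbitrary winning strategy $\sigma$ and pass to a maximally repeated winning strategy $\sigma^{\mathit{pr}}$ as described after Lemma~\ref{lemma-oneplayerimitablecuts}. By Lemma~\ref{lemma-synch-segments}, whenever two transitions $t_1 < t_2$ in $\sigma^{\mathit{pr}}$ satisfy $\synchrequivcuts_{\mathit{loop}}(t_1, t_2)$, the future of $\lkc(t_2)$ can be replaced by a renamed copy of the future of $\lkc(t_1)$ without losing winningness, and similarly for $\loopcuts(t_1, t_2)$. Hence $\sigma^{\mathit{pr}}$ can be truncated to a finite prefix $\Pi$ by cutting along the first $s$-loop (or partial repetition loop) detected on every maximal causal chain; any branching process extending $\Pi$ by copying the appropriate synchronisation segment at every frontier cut is a winning strategy. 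This already reduces the synthesis problem to guessing a finite $\Pi$ and checking, for each cut, that it satisfies the four strategy properties and either lies in the interior of $\Pi$ or matches (up to renaming) an earlier cut.

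The second step is to bound $|\Pi|$. In the 2-player setting, every reachable marking has at most two tokens, so there are only $O(|\places_0|^2)$ distinct last known markings, and a pigeonhole argument forces every maximal causal chain in $\sigma^{\mathit{pr}}$ to hit a partial repetition cut within polynomially many steps. Thus $\Pi$ can be chosen of size polynomial in $|G|$. A nondeterministic procedure then guesses $\Pi$, computes its cuts in polynomial time, and verifies justified refusal, safety, determinism, deadlock avoidance and the frontier-matching conditions in polynomial time, placing the 2-player problem in NP; combined with Lemma~\ref{lemma-2playernphard}, this gives NP-completeness.

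For 3- and 4-player games the same scheme is instantiated with exponential bounds. At the last known cut of a transition $t$, at most three tokens evolve causally independently of $t$; the number of distinct cuts they can visit before either joining $t$'s causal future (condition (2.a) of $\synchronisationsegment(t)$) or closing a partial repetition loop (condition (2.b)) is at most singly-exponential in $|G|$, so $|\synchronisationsegment(t)|$ and the number of inequivalent synchronisation segments are both $2^{\mathit{poly}(|G|)}$. Therefore $\Pi$ can be chosen of size at most $2^{\mathit{poly}(|G|)}$, guessed by a nondeterministic exponential-time machine, and verified in exponential time. The main obstacle is to make this size bound rigorous: one has to argue carefully that, with up to four concurrent players, every infinite causal chain inside $\future(\sigma^{\mathit{pr}}, \lkc(t))$ must eventually trigger some $\partialrepetition(t', t'')$, which reduces to counting the possible pairs $(\lkm(t'), \lkc(t') \setminus \postset(t'))$ and invoking pigeonhole. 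A case analysis on how many of the four players are causally independent of $t$ yields the desired $2^{\mathit{poly}(|G|)}$ bound, completing the NEXP upper bound.
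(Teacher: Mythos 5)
Your proposal follows essentially the same route as the paper: guess a finite winning prefix, use Lemmas~\ref{lemma-oneplayerimitablecuts} and~\ref{lemma-synch-segments} to argue that the prefix can be extended to a full winning strategy by copying futures at imitable (partial-repetition and synchronisation-equivalent) cuts, bound the prefix polynomially in the 2-player case by pigeonhole on markings and exponentially in the 3- and 4-player cases by counting non-isomorphic synchronisation segments, and then verify the guessed prefix in time polynomial in its size. The counting details you leave as "to be made rigorous" are treated at a comparable level of informality in the paper itself, so the argument matches in both structure and substance.
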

%\todo{t und anzahl transitionen}
\begin{proof}
	Structure of this proof: from the 2-player case over the 3-player case to the 4-player case the size of a winning prefix is determined to guarantee that a winning strategy can be constructed by repeating the futures of imitable cuts. A prefix of appropriate size is guessed and it is checked if it is winning. The prefix does not contain any causal successors of transitions $t_2$ if $\exists t_1: \partialrepetition_\mathit{loop}(t_1, t_2) \lor \synchrequivcuts_{\mathit{loop}}(t_1,t_2)$. Also, it does not contain causal successors of transitions $t_4$, if $\exists t_1, t_2,t_3:  t_3 \leq t_2 \land ((\partialrepetition(t_3, t_4) \land \partialrepetition_\mathit{loop}(t_1,t_2)) \lor (\synchrequivcuts(t_2, t_3) \land \synchrequivcuts_{\mathit{loop}}(t_1,t_2)))$. Let $T$ be the number of transitions of the Petri game. 
	
%	Lemma \ref{lemma-2playernphard} states that the 2-player case is NP-hard. 
	
	If a player does not take any joint transition she eventually repeats a place she lays on, i.e. she reaches a partial repetition cut. This occurs after at most $T$ transitions. Otherwise, the two players take a joint transition after at most $T$ transitions. Since there are at most $T$ different joint transitions, a winning prefix from which a winning strategy can be constructed is at most of size $\mathcal{O}(T^2)$. To check if the guessed prefix is winning, we have to check all reachable markings for the winning properties. In a safe Petri net with $T^2$ transitions and at most two tokens there are at most $T^4$ reachable markings. Thus, guessing a winning prefix and checking if it is winning takes time in $\mathcal{O}(|\transitions|^4)$, which shows together with Lemma~\ref{lemma-2playernphard} that the synthesis problem for two-player Petri games is NP-complete. 
	
	In the 3-player case, it holds for the same reasons as in the 2-player case that each pair of two players reaches partial repetition cuts after at most $T^2$ transitions without taking a joint transition with the remaining third player. The remaining third player can take up to $T$ transitions herself before reaching a partial repetition cut or taking a joint transition. Therefore, a synchronisation segment has at most $T^2+T$ transitions resulting in at most $\powerset{(T^2+T)^2}$ non-isomorphic synchronisation segments. Therefore, a synchronisation equivalent cut is always reached after at most $2^{(T^2+T)^2}$ transitions such that the winning prefix can be extended step by step by reaching synchronisation equivalent cuts again and again. Thus, the size of a winning prefix is in $\mathcal{O}(2^{T^4})$. To check if the prefix is winning takes polynomial time in its size and it follows that the 3-player case is decidable in NEXP.
	
%	Since synchronisation equivalent cuts are imitable we can construct a winning prefix by copying the winning strategy from one cut to another until the synchronisation segment is repeated and the copying process is repeated. Therefore, the maximum size of this winning prefix depends on the size of the synchronisation segments. We distinguish three cases: If all 3 players participate in $t$, the synchronisation segment of $t$ is already complete after $t$. If $|\preset(t)| = 2$, the third player continues until she either has $t$ in her causal past, or she repeats her place resulting in a loop. After at most $|\transitions|$ transitions without a joint transitions we reach a loop. It follows that the size of a synchronisation segment is at most in $\mathcal{O}(|\transitions|)$. Since there are $\mathcal{0}(2^{|\transitions|})$ possible synchronisation segments, there is an imitable cut after at most $2^{2|\transitions|+1}$ transitions. This means that the size of a winning prefix from which a winning strategy can be constructed is in $\mathcal{O}(2^{2|\transitions|})$. Since checking whether such a prefix is winning takes polynomial time in the size of the prefix the bound remains in NEXP. 
%	The case $|\preset(t)| = 1$ can be brought back to the previous case since this player either takes a joint transition after $|\transitions|$ transitions or she loops. 

	In a 4-player Petri game, it also holds that each pair of two players reaches partial repetition cuts after at most $T^2$ transitions without taking a joint transition with one of the two remaining players. So, there are two pairs of players that can take $T^2$ transitions until they reach partial repetition cuts or one player of each pair communicate with each other. This leaves the other two players  until they reach partial repetition cuts or take a joint transition again after $T^2$ transitions. Thus, there are at most $\powerset{3T^2}$ non-isomorphic synchronisation segments. Therefore, the size of a winning prefix of a 4-player Petri game that needs to be checked if it is winning is in $\mathcal{O}(2^{6T^2})$. So the bound is in NEXP.	
	Note that if more than 2 players take a joint transition the size of the synchronisation segments decreases.
	%	Now, we also show that it is in np. By Lemma \todo{ref lemma 2 player reduced strategies} we know that we can find a winning strategy of size $|\places|^2$ if there exists a winning strategy. We guess a winning strategy $\omega: \places \times \places \mapsto \powerset{\transitions}$ and check if it is winning. To check if it is a winning strategy we need to construct a prefix. This prefix needs to contain all places and transitions according to $\omega$ until the marking after a shared transition is repeated \emph{somewhere} in $\unf(\omega)$ or a place is repeated if there is only one player at that time. As there are at most $|\places|^2$ markings and a joint transition occurs after at most $|\transitions|$ transitions or we have repetition in self loops, the prefix is of size $|\places|^2 \cdot |\transitions|$. Checking if $\unf(\omega)$ is winning in a cut takes at most $|\transitions|$ steps. 
	%%	Overall checking if $\unf(\omega)$ is a winning strategy has complexity $\mathcal{O}(|\places|^2 \cdot |\transitions|^2)$.
	%	\todo{joint or shared transition einheitlich}
\end{proof}
%Note that in \cite{DBLP:conf/csl/FinkbeinerGHO22} the generalisation of the winning condition of bad places to bad markings of the decidability result with only one environment player \cite{FINKBEINER2017181} is done in 2-EXPTIME, while the original result is EXPTIME-complete.

This construction does not work for the 5-player case. The problem that occurs is tricky to see. Assume there is a group of 3 players and a group of 2 players that take joint transitions repeatedly within their group, only two players participating at a time. The group of 2 players get to partial repetition cuts or one of them takes a joint transition with one player of the other group, while the group of 3 players might not get to partial repetition cuts. After two players, one from each group, have taken a joint transition the problem occurs: the remaining 3 players that did not participate in that joint transition might form a new group of 3 players that do not reach partial repetition cuts. This cannot occur in the 4-player case since there is only a pair of players left.

\section{Undecidability Result}
\label{sec-undecidability}
In this section, a tiling problem is reduced to the synthesis problem of a 6-player Petri game with a global safety condition. The tiling problem used here is the $\omega$ bipartite colouring problem ($\omega$-BCP). The undecidable $\omega$ Post correspondence problem ($\omega$-PCP) is reducible to the $\omega$-BCP. 
%(The proof is in the Appendix \ref{sec-appendix}.)
 Later, the undecidability result can be simplified to local safety as a winning condition. The reduction is similar to the work in \cite{DBLP:journals/lmcs/Gimbert22}, where the (normal) Post correspondence problem is reduced to a colouring problem followed by a reduction to the synthesis problem of 6-process control games with local termination as a winning condition. 

%The translation of the 6-process undecidability result in \cite{DBLP:journals/lmcs/Gimbert22} yields 6-player Petri games with exponentially many more nodes and a local termination condition as a winning condition. The translated Petri games have 6 players, where all players alternate between environment and system places, while the presented Petri game  has at most 3 system players simultaneously. Later it will be seen that the number of system players can be even further reduced to 2. 

Employing the translation of control games into Petri games in
\cite{DBLP:conf/concur/BeutnerFH19},
the 6-process control games yield 6-player Petri games (with
exponentially many additional nodes) for local termination as a winning
condition.
In the obtained Petri game, all players alternate in their roles as
environment and system players.
Instead, we present a direct construction of 6-player Petri games with
at most 3 simultaneous system players.
Later, we show that the number of system players can be even further
reduced to 2.

\begin{definition}[$\omega$ bipartite colourings]
	Let $\colors$ be a finite set of colours. An \emph{$\omega$ bipartite colouring} is a mapping $f: \mathbb{N}^2 \mapsto \colors$. The initial colour is $f(1,1)$. We define three subsets of $\colors^2$ called the patterns induced by $f$:
	\begin{itemize}
		\item the \emph{diagonal patterns} of f are all pairs $\{(f(x,y), f(x+1,y+1)) \mid (x,y) \in \mathbb{N}^2 )\}$	
		\item the \emph{horizontal patterns} of f are all pairs $\{(f(x,y), f(x+1,y)) \mid (x,y) \in \mathbb{N}^2 )\}$
		\item the \emph{vertical patterns} of f are all pairs $\{(f(x,y), f(x,y+1)) \mid (x,y) \in \mathbb{N}^2 )\}$
	\end{itemize}
	We define a \emph{colouring constraint} as a 4-tuple $(\colors_i, \squares, \lowertriangles,\uppertriangles)$, where $\colors_i$ is the set of \emph{initial} colours, $\squares$ a set of diagonal patterns, $\uppertriangles$ a set of vertical patterns and $\lowertriangles$ a set of horizontal patterns. $\squares$, $\uppertriangles$ and $\lowertriangles$ are called \emph{forbidden patterns}. A colouring $f$ \emph{satisfies} a colouring constraint if its initial colour is in $\colors_i$ and no diagonal pattern of $f$ is in $\squares$, no vertical pattern of $f$ is in $\uppertriangles$ and no horizontal pattern of $f$ is in $\lowertriangles$.
\end{definition}
%Now, we define the $\omega$-BCP and reduce the $\omega$-PCP to it. \\
\textbf{$\omega$-BCP.}
	Given a finite set of colours $\colors$ and colouring constraints $(\colors_i, \squares,$ $ \uppertriangles, \lowertriangles)$, decide whether there exists an $\omega$ bipartite colouring that satisfies the colouring constraints. This problem is a variation of the standard tiling problem  and it is also undecidable.
In the following we define a Petri game for which a winning strategy exists if and only if a given $\omega$-BCP has a solution. In this Petri game shown in Fig.~\ref{fig-petri-game-bcp}, there are two identical parts, a top part and a bottom part, each consisting of 3 players. The idea is that the number of \emph{rounds} played in the lower and upper parts refer to the $x$ and $y$ coordinates of a tile, respectively, so that the system players choose a colour for each tile. Each colour choice requires one player from each part of the Petri game, so there can be a maximum of 3 colour choices in one run of the Petri game. If the colours chosen refer to a forbidden pattern a bad marking is reached. As the bad markings have to be defined on the Petri net and not on the unfolding, we define when two system players are in the same round or when a player is one round ahead or one round behind to be able to check for forbidden patterns.
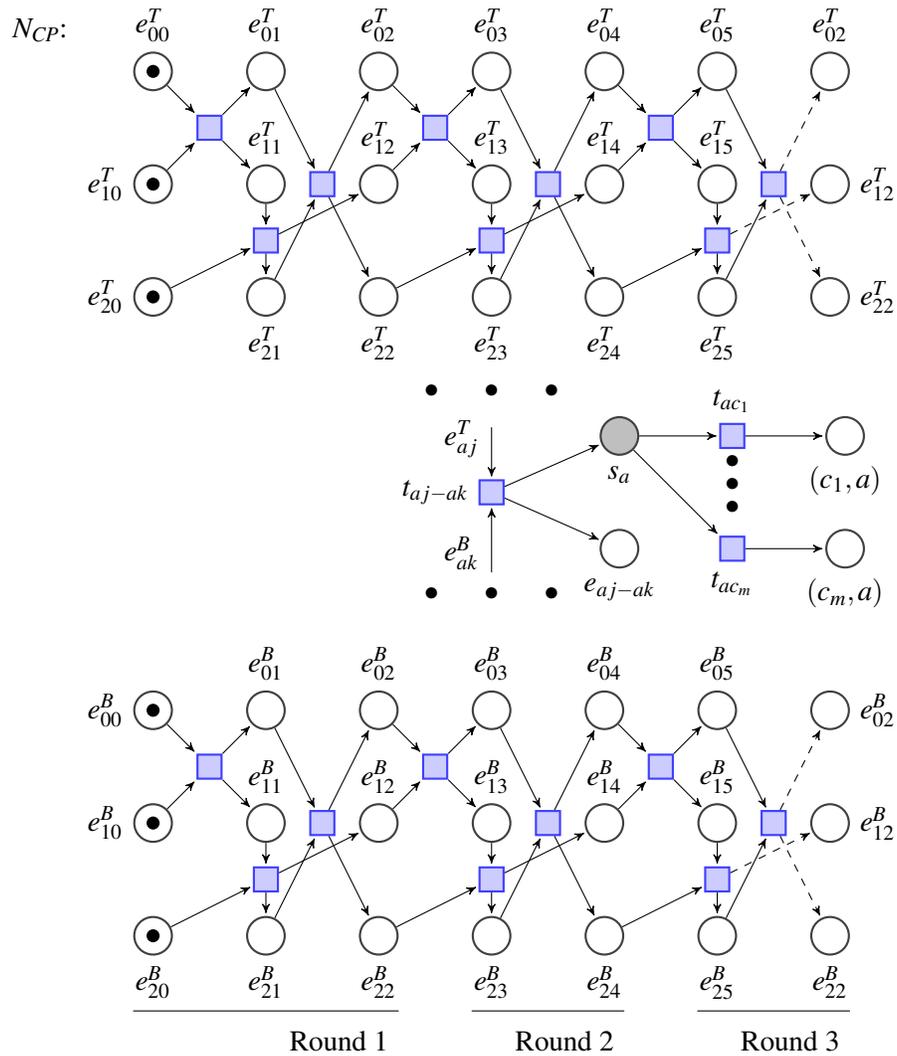
\begin{figure}
	
	\begin{center}
		\begin{tikzpicture}[node distance=1.5cm,>=stealth',bend angle=45,auto]
			
			\node[eplace, tokens=1, label={above:$e^T_{00}$}] (e00) {};
			\node[eplace, right of=e00, label={above:$e^T_{01}$}] (e01) {};
			\node[eplace, right of=e01, label={above:$e^T_{02}$}] (e02) {};
			\node[eplace, right of=e02, label={above:$e^T_{03}$}] (e03) {};
			\node[eplace, right of=e03, label={above:$e^T_{04}$}] (e04) {};
			\node[eplace, right of=e04, label={above:$e^T_{05}$}] (e05) {};
			\node[eplace, right of=e05, label={above:$e^T_{02}$}] (e06) {};
			
			\node[eplace, below of=e00,tokens=1, label={left:$e^T_{10}$}] (e10) {};
			\node[eplace, right of=e10, label={above:$e^T_{11}$}] (e11) {};
			\node[eplace, right of=e11, label={above:$e^T_{12}$}] (e12) {};
			\node[eplace, right of=e12, label={above:$e^T_{13}$}] (e13) {};
			\node[eplace, right of=e13, label={above:$e^T_{14}$}] (e14) {};
			\node[eplace, right of=e14, label={above:$e^T_{15}$}] (e15) {};
			\node[eplace, right of=e15, label={right:$e^T_{12}$}] (e16) {};
			
			\node[eplace, below of=e10, tokens=1, label={left:$e^T_{20}$}] (e20) {};
			\node[eplace, right of=e20, label={below:$e^T_{21}$}] (e21) {};
			\node[eplace, right of=e21, label={below:$e^T_{22}$}] (e22) {};
			\node[eplace, right of=e22, label={below:$e^T_{23}$}] (e23) {};
			\node[eplace, right of=e23, label={below:$e^T_{24}$}] (e24) {};
			\node[eplace, right of=e24, label={below:$e^T_{25}$}] (e25) {};
			\node[eplace, right of=e25, label={right:$e^T_{22}$}] (e26) {};
			
			\node[transition, right of=e00, yshift=-7.5mm, xshift=-7.5mm , label={left:$$}] (t1) {};
			\node[transition, below of=e11 , yshift=7.5mm,label={left:$$}] (t2) {};
			\node[transition, right of=e11 ,xshift=-7.5mm, label={left:$$}] (t3) {};
			
			\node[transition, right of=e02, yshift=-7.5mm, xshift=-7.5mm , label={left:$$}] (t4) {};
			\node[transition, below of=e13 , yshift=7.5mm,label={left:$$}] (t5) {};
			\node[transition, right of=e13 ,xshift=-7.5mm, label={left:$$}] (t6) {};
			
			\node[transition, right of=e04, yshift=-7.5mm, xshift=-7.5mm , label={left:$$}] (t7) {};
			\node[transition, below of=e15 , yshift=7.5mm,label={left:$$}] (t8) {};
			\node[transition, right of=e15 ,xshift=-7.5mm, label={left:$$}] (t9) {};

			%bottom part
			\node[eplace, tokens=1, below of=e20, yshift=-40mm, label={left:$e^B_{00}$}] (e00b) {};
			\node[eplace, right of=e00b, label={above:$e^B_{01}$}] (e01b) {};
			\node[eplace, right of=e01b, label={above:$e^B_{02}$}] (e02b) {};
			\node[eplace, right of=e02b, label={above:$e^B_{03}$}] (e03b) {};
			\node[eplace, right of=e03b, label={above:$e^B_{04}$}] (e04b) {};
			\node[eplace, right of=e04b, label={above:$e^B_{05}$}] (e05b) {};
			\node[eplace, right of=e05b, label={right:$e^B_{02}$}] (e06b) {};
			
			\node[eplace, below of=e00b,tokens=1, label={left:$e^B_{10}$}] (e10b) {};
			\node[eplace, right of=e10b, label={above:$e^B_{11}$}] (e11b) {};
			\node[eplace, right of=e11b, label={above:$e^B_{12}$}] (e12b) {};
			\node[eplace, right of=e12b, label={above:$e^B_{13}$}] (e13b) {};
			\node[eplace, right of=e13b, label={above:$e^B_{14}$}] (e14b) {};
			\node[eplace, right of=e14b, label={above:$e^B_{15}$}] (e15b) {};
			\node[eplace, right of=e15b, label={right:$e^B_{12}$}] (e16b) {};
			
			\node[eplace, below of=e10b, tokens=1, label={below:$e^B_{20}$}] (e20b) {};
			\node[eplace, right of=e20b, label={below:$e^B_{21}$}] (e21b) {};
			\node[eplace, right of=e21b, label={below:$e^B_{22}$}] (e22b) {};
			\node[eplace, right of=e22b, label={below:$e^B_{23}$}] (e23b) {};
			\node[eplace, right of=e23b, label={below:$e^B_{24}$}] (e24b) {};
			\node[eplace, right of=e24b, label={below:$e^B_{25}$}] (e25b) {};
			\node[eplace, right of=e25b, label={below:$e^B_{22}$}] (e26b) {};
			
			\node[transition, right of=e00b, yshift=-7.5mm, xshift=-7.5mm , label={left:$$}] (t1b) {};
			\node[transition, below of=e11b , yshift=7.5mm,label={left:$$}] (t2b) {};
			\node[transition, right of=e11b ,xshift=-7.5mm, label={left:$$}] (t3b) {};
			
			\node[transition, right of=e02b, yshift=-7.5mm, xshift=-7.5mm , label={left:$$}] (t4b) {};
			\node[transition, below of=e13b , yshift=7.5mm,label={left:$$}] (t5b) {};
			\node[transition, right of=e13b ,xshift=-7.5mm, label={left:$$}] (t6b) {};
			
			\node[transition, right of=e04b, yshift=-7.5mm, xshift=-7.5mm , label={left:$$}] (t7b) {};
			\node[transition, below of=e15b , yshift=7.5mm,label={left:$$}] (t8b) {};
			\node[transition, right of=e15b ,xshift=-7.5mm, label={left:$$}] (t9b) {};
			
			\node[transition, below of=e23 ,yshift=-11mm, label={left:$t_{aj-ak}$}] (tc) {};
			\node[eplace, xshift=2mm, right of=tc,yshift=-7.5mm, label={below:$e_{aj-ak}$}] (ec) {};
			\node[splace, xshift=2mm, right of=tc,yshift=7.5mm, label={below:$s_a$}] (sc) {};
			\node[transition, right of=sc , label={above:$t_{ac_1}$}] (tc2) {};
			\node[transition, below of=tc2 , label={below:$t_{ac_m}$}] (tc3) {};
			\node[eplace, right of=tc2, label={below:$(c_1,a)$}] (ec2) {};
			\node[eplace, right of=tc3, label={below:$(c_m,a)$}] (ec3) {};
			
			\node[below of=e23, yshift=-1mm,label={$\bullet$}] (dot1) {};
			\node[left of=dot1, xshift=7mm, label={$\bullet$}] (dot2) {};
			\node[left of=dot1, xshift=11mm, yshift=-8mm, label={$e^T_{aj}$}] (dot2) {};
			\node[right of=dot1, xshift=-7mm, label={$\bullet$}] (dot3) {};	
			
			\node[above of=e03b, yshift=-3mm,label={$\bullet$}] (dot4) {};
			\node[left of=dot4, xshift=7mm, label={$\bullet$}] (dot5) {};
			\node[left of=dot4, xshift=11mm,yshift=4mm, label={$e^B_{ak}$}] (dot5b) {};
			\node[right of=dot4, xshift=-7mm, label={$\bullet$}] (dot6) {};	
			
			\node[below of=tc2, yshift=8mm,label={$\bullet$}] (dot7) {};
			\node[below of=tc2, yshift=5mm, label={$\bullet$}] (dot8) {};
			\node[below of=tc2, yshift=2mm, label={$\bullet$}] (dot9) {};	
			
			\node[above of=dot4, yshift=-10mm] (h2) {};
			\node[below of=dot1, yshift=15mm] (h1) {};
			
			\node[left of=dot2, yshift=5mm] (h3) {};
			\node[right of=dot3, yshift=5mm] (h4) {};
			
			\node[below of=e20b, yshift=5mm, xshift=-4mm] (r11) {};
			\node[below of=e22b, yshift=5mm, xshift=4mm,label={below left:Round 1}] (r12) {};
			
			\node[below of=e23b, yshift=5mm, xshift=-4mm] (r21) {};
			\node[below of=e24b, yshift=5mm, xshift=4mm,label={below left:Round 2}] (r22) {};
			
			\node[below of=e25b, yshift=5mm, xshift=-4mm] (r31) {};
			\node[below of=e26b, yshift=5mm, xshift=4mm,label={below left:Round 3}] (r32) {};
			
%			\node[left of=tc, xshift=-15mm, label={below:$\preset(t_{xj-xk}) = \{ e^T_{xj}, e^B_{xk} \}$}] (l1) {};
	\node[left of= e00, yshift=5.5mm](label){$N_\coloringproblem$:};
			
			\draw[->] 
			(t1)  		edge [pre]                            (e00)
			edge [pre]                            (e10)
			edge [post]                           (e01)
			edge [post]                           (e11)
			(t2)  		edge [pre]                            (e20)
			edge [pre]                            (e11)
			edge [post]                           (e21)
			edge [post]                           (e12)
			(t3)  		edge [pre]                            (e01)
			edge [pre]                            (e21)
			edge [post]                           (e02)
			edge [post]                           (e22)
			(t4)  		edge [pre]                            (e02)
			edge [pre]                            (e12)
			edge [post]                           (e03)
			edge [post]                           (e13)
			(t5)  		edge [pre]                            (e13)
			edge [pre]                            (e22)
			edge [post]                           (e14)
			edge [post]                           (e23)
			(t6)  		edge [pre]                            (e03)
			edge [pre]                            (e23)
			edge [post]                           (e04)
			edge [post]                           (e24)
			(t7)  		edge [pre]                            (e04)
			edge [pre]                            (e14)
			edge [post]                           (e05)
			edge [post]                           (e15)
			(t8)  		edge [pre]                            (e24)
			edge [pre]                            (e15)
			edge [post]                           (e25)
			(t9)  		edge [pre]                            (e05)
			edge [pre]                            (e25)	
			
			(t1b)  		edge [pre]                            (e00b)
			edge [pre]                            (e10b)
			edge [post]                           (e01b)
			edge [post]                           (e11b)
			(t2b)  		edge [pre]                            (e20b)
			edge [pre]                            (e11b)
			edge [post]                           (e21b)
			edge [post]                           (e12b)
			(t3b)  		edge [pre]                            (e01b)
			edge [pre]                            (e21b)
			edge [post]                           (e02b)
			edge [post]                           (e22b)
			(t4b)  		edge [pre]                            (e02b)
			edge [pre]                            (e12b)
			edge [post]                           (e03b)
			edge [post]                           (e13b)
			(t5b)  		edge [pre]                            (e13b)
			edge [pre]                            (e22b)
			edge [post]                           (e14b)
			edge [post]                           (e23b)
			(t6b)  		edge [pre]                            (e03b)
			edge [pre]                            (e23b)
			edge [post]                           (e04b)
			edge [post]                           (e24b)
			(t7b)  		edge [pre]                            (e04b)
			edge [pre]                            (e14b)
			edge [post]                           (e05b)
			edge [post]                           (e15b)
			(t8b)  		edge [pre]                            (e24b)
			edge [pre]                            (e15b)
			edge [post]                           (e25b)
			(t9b)  		edge [pre]                            (e05b)
			edge [pre]                            (e25b)
			(tc)		edge [pre]							  (h1)
			edge [pre]							  (h2)
			edge [post]							  (sc)
			edge [post]							  (ec)
			(tc2) 		edge [pre]							  (sc)
			edge [post]							  (ec2)
			(tc3) 		edge [pre]							  (sc)
			edge [post]							  (ec3)
			%		(h1)		edge [pre]							  (h3)
			%					edge [pre]							  (h4)
			;
			\draw[dashed]
			(t9)			edge [post]                           (e06)
			edge [post]                           (e26)
			(t8) 			edge [post]                           (e16)
			(t9b)			edge [post]                           (e06b)
			edge [post]                           (e26b)
			(t8b) 			edge [post]                           (e16b)
			;
			\draw[]
			(r11) -- (r12)
			(r21) -- (r22)
			(r31) -- (r32)
			;
		\end{tikzpicture}
	\end{center}
	
	\caption{Petri net $N_\coloringproblem$ of the Petri game $G_\coloringproblem$ of an $\omega$-BCP. We have two structurally identical parts consisting of 3 environment players, a top part and a bottom part. Two of the players synchronise at each transition in these parts. After firing 3 such transitions, each player has synchronised with the other two players in its part. We define such a block of 3 transitions as a \emph{round}. Each part has 3 rounds. After the third round the players return to the places they were in before the second round.
	This means that we have an initial round and then the players alternate between the second and third rounds. As the game progresses, the nodes have exact information in the unfolding about how many rounds they have played. In the middle part, the environment players can take a \emph{check} transition $t_{aj-ak}$, where $a \in \{0,1,2\}$ and $j,k \in \{0, \ldots ,5\}$ and $\preset(t_{aj-ak}) = \{ e^T_{aj}, e^B_{ak} \}$.
	After that, the system player on place $s_a$ has to choose a colour. Based on the number of rounds $x$ and $y$ played in the lower part and upper part, respectively, the system player has to determine the colour $f(x,y)$ of the given $\omega$-BCP. 
%	(The formal definition of this Petri net can be found in the Appendix \ref{sec-appendix}.)
 	}
	\label{fig-petri-game-bcp}
\end{figure}
%\todo{change notation of places to $T_{00}$}

\textbf{Compared rounds of two players.}	
In the Petri net $N_\coloringproblem$ of Fig. \ref{fig-petri-game-bcp}, we compare the rounds of two players $a$ and $b$, both from the same part. We say $a, b \in \{0,1,2\}$ are in the same round if and only if for their places $e^X_{aj}$ and $e^X_{bk}$, $X \in \{T,B\}$, $j,k \in \{0,1,2\}$ or $j,k \in \{3,4\}$ or $j,k \in \{5,2\}$ holds. We say that $a$ is one round ahead of $b$ (or $b$ is one round behind of $a$) if and only if ($j \in \{3\}$ and $k \in \{2\}$) or ($j \in \{5\}$ and $k \in \{4\}$) . Other combinations of indices are not possible.

%Now we define the Petri game $G_\coloringproblem$ of an $\omega$-BCP. (The formal definition is in the Appendix \ref{sec-appendix}).\\

%\textbf{Set of bad markings.}
%The set of bad markings of the Petri game $G_\coloringproblem$ shown in Fig. \ref{fig-petri-game-bcp} is: 	$\badmarkings = \badmarkings_{\mathit{Same}} \cup \badmarkings_\squares \cup \badmarkings_\lowertriangles \cup \badmarkings_\uppertriangles \cup \badmarkings_\mathit{init}$, where \\
%	$\badmarkings_\mathit{Same} = \{ \{ e_{a^Tj-a^Bj'}, e_{b^Tk-b^Bk'},  (c_{u},a), (c_{v},b)\} \mid \\ a^T \text{ and } b^T\text{ are in the same round and } a^B  \text{ and } b^B \text{ are in same the round }, c_u \neq c_v\}$, \\
%	$\badmarkings_\mathit{\squares} = \{ \{ e_{a^Tj-a^Bj'}, e_{b^Tk-b^Bk'},  (c_{u},a), (c_{v},b) \mid \\ a^T \text{ and } a^B\text{ are one round ahead of } b^T \text{ and } b^B \text{ respectively } , (c_v, c_u)  \in \squares \}$,\\
%	$\badmarkings_\uppertriangles = \{ \{\{ e_{a^Tj-a^Bj'}, e_{b^Tk-b^Bk'},  (c_{u},a), (c_{v},b) \} \mid \\ a^T\text{ is a round ahead of }  b^T \text{ and } a^B \text{ and } b^B\text{ are in the same round }, (c_v, c_u)  \in \uppertriangles \}\}$,\\
%	$\badmarkings_\lowertriangles = \{ \{ \{ e_{a^Tj-a^Bj'}, e_{b^Tk-b^Bk'},  (c_{u},a), (c_{v},b) \} \mid \\ a^B\text{ is a round ahead of }  b^B \text{ and } a^T \text{ and } b^B \text{ are in the same round }, (c_v, c_u)  \in \lowertriangles \}\}$,\\
%	$\badmarkings_{\mathit{init}} = \{ \{ e_{a^Tj-a^Bj'}, (c_u,a) \} \mid  j,j' \in \{0,1\}, c_u \notin \colors_i \}$. 

\begin{definition}[Petri game of an $\omega$ Bipartite colouring Problem]
	\label{def-petri-game-coloring-problem}
	Let $\coloringproblem$ be an $\omega$-BCP with colours $\colors$ and constraints $\colors_i$, $\squares$, $\uppertriangles$ and $\lowertriangles$. The components of the Petri net of the Petri game $G_\coloringproblem$ are the components of $N_\coloringproblem$ from Fig. \ref{fig-petri-game-bcp}. The bad markings of the Petri game $G_\coloringproblem$ are defined as follows:
	
	$\badmarkings = \badmarkings_{\mathit{Same}} \cup \badmarkings_\squares \cup \badmarkings_\lowertriangles \cup \badmarkings_\uppertriangles \cup \badmarkings_\mathit{init}$, where \\
	$\badmarkings_\mathit{Same} = \{ \{ e_{a^Tj-a^Bj'}, e_{b^Tk-b^Bk'},  (c_{u},a), (c_{v},b)\} \mid \\ a^T \text{ and } b^T\text{ are in the same round and } a^B  \text{ and } b^B \text{ are in same the round }, c_u \neq c_v\}$, \\
	$\badmarkings_\mathit{\squares} = \{ \{ e_{a^Tj-a^Bj'}, e_{b^Tk-b^Bk'},  (c_{u},a), (c_{v},b) \mid \\ a^T \text{ and } a^B\text{ are one round ahead of } b^T \text{ and } b^B \text{ respectively } , (c_v, c_u)  \in \squares \}$,\\
	$\badmarkings_\uppertriangles = \{ \{\{ e_{a^Tj-a^Bj'}, e_{b^Tk-b^Bk'},  (c_{u},a), (c_{v},b) \} \mid \\ a^T\text{ is a round ahead of }  b^T \text{ and } a^B \text{ and } b^B\text{ are in the same round }, (c_v, c_u)  \in \uppertriangles \}\}$,\\
	$\badmarkings_\lowertriangles = \{ \{ \{ e_{a^Tj-a^Bj'}, e_{b^Tk-b^Bk'},  (c_{u},a), (c_{v},b) \} \mid \\ a^B\text{ is a round ahead of }  b^B \text{ and } a^T \text{ and } b^B \text{ are in the same round }, (c_v, c_u)  \in \lowertriangles \}\}$,\\
	$\badmarkings_{\mathit{init}} = \{ \{ e_{a^Tj-a^Bj'}, (c_u,a) \} \mid  j,j' \in \{0,1\}, c_u \notin \colors_i \}$.
\end{definition}

In addition to the colouring constraints, the bad markings contain the set $\badmarkings_\mathit{Same}$ to ensure that system players must choose the same colour for two checks whenever the checks occur in the same round for both players.
To define a colouring from a strategy of the Petri game, we define the number of rounds played in the unfolding. 

\textbf{Number of rounds.}
Let $\unf(G_\coloringproblem)$ be the unfolding of $G_\coloringproblem$ of Fig. \ref{fig-petri-game-bcp}. The \emph{number of rounds} played in part $X$, $X \in \{T,B\}$ in a place $p \in \places_{\unf(G_\coloringproblem)}$ is defined as $r^X(p) =  \mathit{max}(\lceil |\{ t \in \transitions_{\unf_\coloringproblem} \mid t \leq p \text{ and } t \text{ is a transition in part } X \}| / 3\rceil, 1)$ %Anders, von spieler abhaengig X und nummer x

This means that the number of rounds is incremented after a player has taken two transitions,  starting from round 1. The divisor is 3 because the players also get the knowledge of the synchronisation of the other two players in their part. The number of rounds is not increased after a check transition.

\begin{theorem}[Characterization of $\omega$-BCP as a Petri game]
%	For every instance $\coloringproblem$ of the $\omega$ Bipartite colouring problem there exists a Petri game $G_\coloringproblem$ such that there exists a colouring $S$ that satisfies the colouring constraints of $\coloringproblem$ if and only if there exists a winning strategy for $G_\coloringproblem$.
The $\omega$-BCP is reducible to the synthesis Problem of 6-player Petri games.
\end{theorem}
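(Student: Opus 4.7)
The plan is to show that the Petri game $G_\coloringproblem$ admits a winning strategy if and only if the $\omega$-BCP instance $(\colors_i, \squares, \uppertriangles, \lowertriangles)$ has a solution. Since the construction $\coloringproblem \mapsto G_\coloringproblem$ is effective and $\omega$-BCP is undecidable, this yields the reduction claimed by the theorem.

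For the forward direction, I would start from an $\omega$ bipartite colouring $f: \mathbb{N}^2 \to \colors$ satisfying the constraints and construct a winning strategy $\sigma$ as a subprocess of $\unf(G_\coloringproblem)$. The only genuine system decisions occur at copies of the place $s_a$ reached after a check transition $t_{aj-ak}$; I would define $\sigma$ by committing, on any copy of $s_a$ whose causal past records $x$ rounds in the top part and $y$ rounds in the bottom part (as measured by $r^T$ and $r^B$), to the colour transition $t_{ac_u}$ with $c_u = f(x,y)$. Checking justified refusal, determinism, and deadlock avoiding would be routine since exactly one colour transition is selected at each copy of $s_a$ and no environment transition is ever blocked. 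Safety would be verified class by class against $\badmarkings_\mathit{init}$ (using $f(1,1) \in \colors_i$), $\badmarkings_\mathit{Same}$ (using that $f$ is a function), and $\badmarkings_\squares, \badmarkings_\uppertriangles, \badmarkings_\lowertriangles$ (using the respective forbidden-pattern clauses, since two concurrent checks whose round configurations indicate a diagonal, vertical or horizontal neighbourhood produce exactly the colour pair constrained by $f$).

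For the backward direction, I would fix a winning strategy $\sigma$ and extract a colouring $f$ by setting $f(x,y) := c_u$ where $t_{ac_u}$ is any transition fired in $\sigma$ after a check whose top-part preset records $x$ rounds and whose bottom-part preset records $y$ rounds. Existence of such a transition for every $(x,y) \in \mathbb{N}^2$ follows from deadlock avoiding together with the fact that the three environment players in each part can always fire further round transitions and then trigger a check, forcing $\sigma$ to pick some colour. The hard part will be well-definedness of $f$: I would need to argue that any two such transitions return the same colour, which I plan to do via $\badmarkings_\mathit{Same}$ by exhibiting in $\unf(G_\coloringproblem)$ a reachable marking of $\sigma$ in which two distinct checks with matching top- and bottom-round counts coexist concurrently with their selected colour tokens; if the two colours differed, safety would be violated. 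The initial-colour condition on $f$ and its compliance with $\squares$, $\uppertriangles$, $\lowertriangles$ would then follow by analogous concurrent witnessing arguments combined with $\badmarkings_\mathit{init}, \badmarkings_\squares, \badmarkings_\uppertriangles, \badmarkings_\lowertriangles$ respectively.

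The main technical obstacle throughout is this concurrent witnessing: for each pattern comparison that must be forced (same round, diagonal, horizontal, vertical), I have to verify that $N_\coloringproblem$'s round-alternation mechanism with three environment players per part really does allow the two independent checks in the desired round configuration to occur simultaneously in a single reachable marking of $\sigma$. This exploits precisely the design of the Petri net in Fig.~\ref{fig-petri-game-bcp}, where pairwise synchronisations in each part together with the return of tokens to the round-2/3 cycle let the environment schedule any two $(x,y)$, $(x',y')$ checks concurrently provided their round counts are compatible. Once this reachability claim is pinned down, the bad markings translate directly into the colouring constraints, closing the equivalence and hence the reduction.
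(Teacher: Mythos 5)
Your plan follows the paper's proof essentially step for step: the forward direction builds the strategy by reading the round counts off the causal history at each copy of $s_a$ and committing to $f$'s colour, and the backward direction extracts a colouring from the strategy and enforces well-definedness and the pattern constraints by exhibiting concurrently schedulable pairs of checks against $\badmarkings_{\mathit{Same}}$, $\badmarkings_\squares$, $\badmarkings_\uppertriangles$, $\badmarkings_\lowertriangles$ and $\badmarkings_{\mathit{init}}$ --- exactly the concurrent-witnessing argument the paper discharges with the concurrency table of Fig.~\ref{fig-concurrency-table}. The only point to watch is that you inverted the paper's convention (there $x$ counts rounds in the \emph{bottom} part and $y$ in the \emph{top} part), which must be kept consistent with the orientation of the vertical and horizontal pattern sets in the bad markings.
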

\begin{proof}
	Let $\coloringproblem$ be a colouring problem and $G_\coloringproblem$ be the Petri game from Def. \ref{def-petri-game-coloring-problem}. We show that there exists a solution for $\coloringproblem$ if and only if there exists a winning strategy for $G_\coloringproblem$.
	
	We start by assuming that there is a solution $f: \mathbb{N} \times \mathbb{N} \mapsto \colors$ for the colouring problem $\coloringproblem$. After the environment player has chosen a check transition the system player must choose a colour in place $s_a$, $a \in \{0,1,2\}$. 
	Since the system player knows the entire causal history, she knows the number of rounds played in the top and the bottom part of the Petri game. Let $r^T(s_a)= y$ denote the number of rounds in the top part and $r^B(s_a)= x$ the number of rounds in the bottom part. 
	Then, the winning strategy for the system players is to choose the colour $f(x,y)$. This way no bad marking is reachable. The bad markings in $\badmarkings_{\mathit{Same}}$ are avoided because $f(x,y)$ is unique. 
	The bad markings in $\badmarkings_\squares, \badmarkings_\uppertriangles$ and $\badmarkings_\lowertriangles$ are avoided because $f$ avoids all forbidden patterns. Finally, the bad markings $\badmarkings_{\mathit{init}}$ are avoided because $f(1,1)$ is an initial colour.

	Now, we assume that there is a winning strategy for the Petri game $G_\coloringproblem$. We define the colouring $f: \mathbb{N} \times \mathbb{N} \mapsto C$ derived from the winning strategy as follows: $f(x,y) = c$ if there exists a system place $s_a \in \places_{\unf(N_\coloringproblem)}$ with $r^B(s_a) = x$ and $r^T(s_a)= y$ and $t_{ac} \in \postset(s_a)$. 
	
	We show that for every colour that a system player has to choose, there are sequences of checks such that all colouring constraints are met. In these sequences of checks, any two consecutive checks can be carried out concurrently while the system players do not know whether the other check is the previous check or the next check, so the winning strategy must play correctly for both checks. 
	Informally speaking, this leads to infinite sequences of dependencies as a check depends on the choice of the colour of the previous check and again that check depends on its previous check and so on. %Those sequences split up into more sequences at every check 
	
		\begin{figure}[ht]
		\centering
		\scalebox{0.8}{
			\begin{tabular}{|c|c|c|c|c|c|c|c|c|c|c|c|c|c|c|c|c|c|c|c|c|c|}
				\hline 
				& $e^X_{00}$ & $e^X_{10}$ & $e^X_{20}$ & $e^X_{01}$ & $e^X_{11}$ & $e^X_{21}$ & $e^X_{02}$ & $e^X_{12}$ & $e^X_{22}$ & $e^X_{03}$ & $e^X_{13}$ & $e^X_{23}$  & $e^X_{04}$&$e^X_{14}$ &$e^X_{24}$ \\ 
				\hline 
				$e^X_{00}$ &  & $||$ &$ || $&  &  &  &  &  &  &  &  & & & &\\ 
				\hline 
				$e^X_{10} $& $ ||$ &  & $||$ &  &  &  &  &  &  &  &  & & & &  \\ 
				\hline 
				$e^X_{20}$ &$ ||$ & $|| $ &  &$ ||$  & $||$  &  &  &  &  &  &  &  & & & \\ 
				\hline 
				$e^X_{01}$ &  &  & $||$ &  & $||$ & $||$ &  & $||$  &  &  &  & & & & \\ 
				\hline 
				$e^X_{11}$ &  &  & $||$ & $||$  &  & & &  &  &  &  & & & & \\ 
				\hline 
				$	e^X_{21}$ &  &  &  & $||$ & &  &  & $||$  &  &  &  & & & & \\ 
				\hline 
				$e^X_{02} $&  &  &  &  & &  &  & $||$ & $||$ &  &  & & & &  \\ 
				\hline 
				$e^X_{12}$ &  &  &  & $||$  &  & $||$  & $||$ &  & $||$ &  &  & & & &\\ 
				\hline 
				$e^X_{22}$ &  &  &  &  &  &  & $||$ & $||$ &  & $||$ & $||$ & & & & \\ 
				\hline 
				$e^X_{03}$ &  &  &  &  &  &  &  &  & $||$ &  & $||$ & $||$ & & $||$ &  \\ 
				\hline 
				$e^X_{13}$ &  &  &  &  &  &  &  &  & $||$ & $||$ &  & & & & \\ 
				\hline 
				$e^X_{23}$ &  &  &  &  &  &  &  &  &  & $||$  &  &  & & $||$ & \\ 
				\hline 
				$e^X_{04}$ &  &  &  &  &  &  &  &  &  &  &  & & &$||$  &$||$  \\ 
				\hline 
				$e^X_{14}$ &  &  &  &  &  &  &  &  &  & $||$  &  &$||$  & $||$ & & \\ 
				\hline 
				$e^X_{24}$ &  &  &  &  &  &  &  &  &  &  &  & & $||$ & $||$ & \\ 
				\hline 
				%			$e^X_{05}$ &  &  &  &  &  &  &  &  &  &  &  & & & &$||$  & =&$||$  & $||$ & &$||$  &\\ 
				%			\hline 
				%			$e^X_{15}$ &  &  &  &  &  &  &  &  &  &  &  & & & & $||$ & $||$ & = & & & &\\ 
				%			\hline 
				%			$e^X_{25}$ &  &  &  &  &  &  &  &  &  &  &  & & & & & $||$ &  & = & & $||$  &\\ 
				%			\hline 
				%			${e^X_{02}}'$ &  &  &  &  &  &  &  &  &  &  &  & & & & & &  &  & =& $||$  & $||$ \\ 
				%			\hline 
				%			${e^X_{12}}'$ &  &  &  &  &  &  &  &  &  &  &  & & & & &$||$  &  & $||$  & $||$  & =&$||$ \\ 
				%			\hline 
				%			${e^X_{22}}'$ &  &  &  &  &  &  &  &  &  &  &  & & & & & &  &  &$||$  &$||$  & =\\ 
				%			\hline 
				
			\end{tabular}
		}
		\caption{Table of concurrent places of the first two rounds in the unfolding of the Petri game $G_\coloringproblem$. Empty cells denote that the places are causally related. Filled cells denote that the places are concurrent. The pattern continues identically. With the help of this table we can see the sequences of checks to ensure that no bad markings are reached in the Petri game. For example, to check that the initial colour chosen after the transition $t_{00-00}$ and after $t_{22-00}$ is the same, we can perform the following sequence of checks of the indices of the places in the upper part: $00-20-01-12-22$. The indices of the place in the lower part remain $00$ here.
			To carry out two consecutive checks in the same play they need to be concurrent. We can check this in the table by going in a row from $00$ to $20$, then in the column from $20$ to $01$, then again checking in the row and so on. Note that the place of the bottom player can also change in these sequences.  }
		\label{fig-concurrency-table}
		
	\end{figure}	
	
	In Fig. \ref{fig-concurrency-table}, it is crucial to see that there are concurrent places between different rounds such that we can always find at least one suitable check for all diagonal, vertical and horizontal patterns. To check the horizontal pattern of $(f(1,1), f(2,1))$ for example, the checks $t_{00-03}$ and $t_{20-22}$ are fired. The places in the upper part are $e^T_{00}$ and $e^T_{20}$ respectively, which are both in the first round. The places in the lower part are $e^B_{03}$ and $e^B_{22}$ respectively, where $e^B_{03}$ is in the second round and $e^B_{22}$ in the first round. These two checks can be performed concurrently since $e^T_{00}$ and $e^T_{20}$ are concurrent, as are $e^B_{22}$ and $e^B_{03}$. For the initial colour we have the extra initial round which does not get repeated. 
	
	The colouring $f$ is well-defined as the set of bad markings $\badmarkings_{\mathit{Same}}$ ensures that every two colours chosen when both top rounds and both bottom rounds are the same the colours chosen also must be the same in a winning strategy. This can be seen with the help of Fig. \ref{fig-concurrency-table}, too. There, we can find a sequence of checks throughout one round such that the colour chosen at the beginning of that round (e.g. $e^B_{03}$) is still the same colour as chosen at the end of that round (e.g $e^B_{04}$) (assuming that the top round does not change too).
	%	This can be seen on the sequence of checks $t_{13-13}t_{03-03}t_{23-23}t_{14-14}t_{04-04}t_{24-24}$. The sequences of checks starting in $t_{15-15}$, $t_{13-15}$, $t_{15-13}$, and $t_{10-10}$ are chosen analogously.
	%	
	%	The sequences of checks for diagonal patterns are $t_{12-12}t_{22-22}t_{03-03}t_{14-14}$, $t_{14-14}t_{24-24}t_{05-05}t_{12-12}$, $t_{12-14}t_{22-24}t_{03-05}t_{14-12}$ and $t_{14-12}t_{24-22}t_{05-03}t_{12-14}$. 
	%	
	%	The sequences of checks for vertical patterns are $t_{12-12}t_{22-22}t_{03-02}t_{14-12}$, $t_{14-14}t_{24-24}t_{05-04}t_{12-14}$, $t_{12-14}t_{22-24}t_{03-04}t_{14-14}$ and $t_{14-12}t_{24-22}t_{05-02}t_{12-12}$. 
	%	
	%	The sequences of checks for horizontal patterns are symmetric to the sequences of checks for vertical patterns. 
	%		
\end{proof}
It is easy to see that the Petri game presented here performs at most 3 checks in every possible sequence of transitions. Therefore this Petri game has at most 3 system players. For the undecidability result it is sufficient to have 2 concurrent checks. Therefore, the number of system players can be further reduced to 2 by adapting the Petri game. This could be done by adding two environment players that are consumed performing a check.
Furthermore, by adding a transition for each bad marking to a bad place, the set of bad markings can be simplified to a set of bad places, the local safety condition.

\section{Conclusions}
\label{sec-related-work}
%The undecidability result presented in this paper is closely related to the result in \cite{DBLP:journals/lmcs/Gimbert22} where the undecidability with at least 6 players is shown in the related model of asynchronous games on Zielonka automata. This paper determines the bound for Petri games to be also at least 6 players.
%
%In \cite{DBLP:conf/fsttcs/Gimbert17} it has been shown that the case with up to 4 players in the setting of asynchronous games is decidable although the presentation of that result is not fully comprehensible.

Our contribution to the synthesis problem of distributed systems is that this problem is  undecidable for 6-player Petri games under a local safety condition and that two system processes are sufficient to obtain undecidability. This adds neatly to the result in \cite{DBLP:journals/corr/abs-1710-05368}, where Petri games with one system player are solved in exponential time. Furthermore, the synthesis problem for 2-player Petri games belongs to the class of NP-complete problems, and we provide a non-deterministic exponential upper bound for this problem for Petri games with a variable number of players up to 4 under a global safety condition. This is a new class of Petri games for which the synthesis problem is decidable. 
The case of 5 players remains open.
\bibliographystyle{eptcs}
\bibliography{refs}

%\newpage
%\appendix
%\section{Appendix}
%\label{sec-app-0}
%\input{sections/6-appendix.tex}
\end{document}